\documentclass[sigconf]{acmart}

%
\AtBeginDocument{%
  }

\setcopyright{acmcopyright}
\copyrightyear{2024}
\acmYear{2024}
\acmDOI{XXXXXXX.XXXXXXX}

\acmConference[ISSAC '24]{ISSAC 2024}{July 16--19,
  2024}{Raleigh, NC, USA}
\acmPrice{15.00}
\acmISBN{978-1-4503-XXXX-X/18/06}




\usepackage{enumitem}
\usepackage{bbm,tikz}
\usepackage{algorithm}
\usepackage{algpseudocode}
\algrenewcommand\algorithmicrequire{\textbf{Input:}}
\algrenewcommand\algorithmicensure{\textbf{Output:}}
\usepackage{amsmath,bm}
\usepackage{parallel,enumitem}
\usepackage{todonotes}
\usepackage{xspace,setspace}

\newcommand{\V}{\bm{V}}

\newcommand{\R}{\mathbb{R}}

\newcommand{\N}{\mathbb{N}}
\newcommand{\Q}{\mathbb{Q}}

\newcommand{\xb}{\mathbf{x}}
\newcommand{\yb}{\mathbf{y}}
\newcommand{\ab}{\mathbf{a}}
\newcommand{\bb}{\mathbf{b}}
\newcommand{\gb}{\bm{g}}
\newcommand{\C}{\mathbb{C}}

\newcommand{\Cn}{\C^n}

\def\1{{\mathbbm 1}}
\newcommand{\gbt}{\widetilde{\gb}}

\newcommand{\remi}[1]{\textcolor{purple}{#1}}
\newcommand{\todoremi}[2][]{\todo[color=pink,#1]{#2}}
\newcommand{\todoremigh}[3][]{\todo[color=pink,#1]{#2}{\remi{#3}}}

\newcommand{\compose}{\textsf{Compose}\xspace}

\newcommand{\InRadical}{\textsf{InRadical}\xspace}
\newcommand{\VectorBasis}{\textsf{VectorBasis}\xspace}
\newcommand{\InvariantSet}{\textsf{InvariantSet}\xspace}
\newcommand{\CheckPI}{\textsf{CheckPI}\xspace}

\theoremstyle{plain}
\newtheorem{theorem}{Theorem}[section]
\newtheorem{lemma}[theorem]{Lemma}
\newtheorem{proposition}[theorem]{Proposition}

\newtheorem{definition}[theorem]{Definition}
\theoremstyle{definition}
\newtheorem{example}{Example}
\theoremstyle{remark}
\newtheorem{remark}{Remark}

\newcommand{\programbox}[2][\linewidth]{
\begin{samepage}\normalfont
\vspace*{.5em}\hspace*{0.3cm}\fbox{
\hspace*{-0.3cm}\begin{minipage}{#1}
\vspace*{-.1em}\begin{algorithmic}\setstretch{1}
#2
\end{algorithmic}
\vspace*{-.2em}
\end{minipage}
}\\
\end{samepage}
}

\newcommand{\programboxappendix}[2][\linewidth]{
\begin{samepage}\normalfont
\vspace*{.5em}\hspace*{0.0cm}\fbox{
\hspace*{-0.3cm}\begin{minipage}{#1}
\vspace*{-.1em}\begin{algorithmic}\setstretch{1}
#2
\end{algorithmic}
\vspace*{-.2em}
\end{minipage}
}\\
\end{samepage}
}

\begin{document}

\title{Algebraic Tools for Computing Polynomial Loop Invariants 
}

\author{Erdenebayar Bayarmagnai}
\email{erdenebayar.bayarmagnai@kuleuven.be}
\affiliation{%
  \institution{KU Leuven}
  \country{Belgium}
}

\author{Fatemeh Mohammadi}
\email{fatemeh.mohammadi@kuleuven.be}
\affiliation{%
 \institution{KU Leuven}
  \country{Belgium}}

\author{R\'emi Pr\'ebet}
\email{remi.prebet@kuleuven.be}
\affiliation{%
 \institution{KU Leuven}
  \country{Belgium}}

\renewcommand{\shortauthors}{Bayarmagnai, Mohammadi and Pr\'ebet}

\begin{abstract}
Loop invariants are properties of a program loop that hold before and after each iteration of the loop. They are often employed to verify programs and ensure that algorithms consistently produce correct results during execution. Consequently, the generation of invariants becomes a crucial task for loops. We specifically focus on polynomial loops, where both the loop conditions and assignments within the loop are expressed as polynomials. Although computing polynomial invariants for general loops is undecidable, efficient algorithms have been developed for certain classes of loops. For instance, when all assignments within a while loop involve linear polynomials, the loop becomes solvable. In this work, we study the more general case where the polynomials exhibit arbitrary degrees.

Applying tools from algebraic geometry, we present two algorithms designed to generate all polynomial invariants for a while loop, up to a specified degree. These algorithms differ based on whether the initial values of the loop variables are given or treated as parameters. Furthermore, we introduce various methods to address cases where the algebraic problem exceeds the computational capabilities of our methods. In such instances, we identify alternative approaches to generate specific polynomial invariants.

\end{abstract}
\thanks{%
The authors are partially supported by the KU Leuven grant iBOF/23/064, the FWO grants G0F5921N and G023721N, and the UiT Aurora project MASCOT.
  }

\begin{CCSXML}
<ccs2012>
   <concept>
       <concept_id>10010405.10010432.10010442</concept_id>
       <concept_desc>Applied computing~Mathematics and statistics</concept_desc>
       <concept_significance>500</concept_significance>
       </concept>
   <concept>
       <concept_id>10010405.10010481.10010484.10011817</concept_id>
       <concept_desc>Applied computing~Multi-criterion optimization and decision-making</concept_desc>
       <concept_significance>500</concept_significance>
       </concept>
 </ccs2012>
\end{CCSXML}

\ccsdesc[500]{Applied computing~Invariants; Logic and verification } 
\ccsdesc[500]{Computing methodologies~Symbolic and algebraic manipulation}

\keywords{Program synthesis, Loop invariants, 
Polynomial ideals}

\maketitle

\newpage
\section{Introduction}

Loop invariants denote properties that hold both before and after each iteration of a loop within a given program. They play a crucial role in automating the program verification, ensuring that algorithms consistently yield correct results prior to execution.
Notably, 
various recognized methods for safety verification like the Floyd–Hoare inductive assertion technique \cite{floyd1993assigning} and the termination verification via standard ranking functions technique \cite{manna2012temporal} rely on loop invariants to verify correctness, ensuring complete automation in the verification process. 

In this work, we focus on polynomial loops, wherein expressions within assignments and conditions are polynomials equations in program variables. 
More precisely, a polynomial loop is of the form:

\programbox[0.55\linewidth]{
\State$(x_1, x_2,\ldots, x_n)=(a_1,a_2,\ldots,a_n)$
\While{$g_1 = \cdots = g_k =0$}
\State $\begin{pmatrix}
x_1 \\
x_2 \\
\vdots \\
x_n
\end{pmatrix}
\xleftarrow{\textbf{F}}
\begin{pmatrix}
f_1\\
f_2\\
\vdots\\
f_n
\end{pmatrix}
$
\EndWhile
}\label{page:alg}

\noindent where the $x_i$'s represent program variables with initial values $a_i$, and $g_i$'s and $f_i$'s are polynomials in the program variables.  
Computing polynomial invariants for loops has been a subject of study over the past two decades, see e.g.~\cite{de2017synthesizing, 
kovacs2023algebra, Unsolvableloops, hrushovski2018polynomial, karr1976affine,kovacs2008reasoning,rodriguez2004automatic,rodriguez2007automatic, rodriguez2007generating}.
Computing polynomial invariants for general loops is undecidable \cite{hrushovski2023strongest}. Therefore, particular emphasis has been placed on specific families of loops, especially those in which the assertions are all linear or 
can be reduced to linear assertions. In the realm of linear invariants, Michael Karr introduced an algorithm pioneering the computation of all linear invariants for loops where each assignment within the loop is a linear function \cite{karr1976affine}. Subsequent studies, such as \cite{muller2004note} and \cite{rodriguez2007automatic}, have demonstrated the feasibility of computing all polynomial invariants up to a specified degree for loops featuring linear assignments. Further, the problem of generating {\em all} polynomial invariants for loops with linear assignments, 
are studied in \cite{hrushovski2018polynomial} and \cite{rodriguez2007automatic}. 

Another class of loops for which invariants have been successfully computed is the family of solvable loops. These loops are characterized by polynomial assignments that are either inherently linear or can be transformed into linear forms through a change of variables, as elaborated in \cite{de2016polynomial} and \cite{kovacs2008reasoning}. Nonetheless, challenges persist when dealing with loops featuring non-linear or unsolvable assignments, as discussed in \cite{rodriguez2007generating} and \cite{Unsolvableloops}.

Before stating our main results, we introduce some terminology from algebraic geometry. We refer to \cite{cox2013ideals,kempf_1993} for further details. 
Let $\mathbb{C}$ denote the field of complex numbers.
Let $S$ be a set of polynomials in $\mathbb{C}[x_1, \ldots, x_n]$, then the \emph{algebraic variety} $\V(S)$ associated to $S$ is the common zero set of all polynomials in $S$. 
Here, $\V(S) = \V(\langle S \rangle)$, where $\langle S \rangle$ is the ideal generated by~$S$.
Conversely, the \emph{defining ideal} of a subset $X\subset\mathbb{C}^n$ is the set of polynomials in $\mathbb{C}[x_1, \ldots, x_n]$ that vanish on $X$. 
The algebraic variety associated to the ideal $I(X)$ is called the \emph{Zariski closure} of $X$. Therefore, if $X$ is an algebraic variety, then $\V(I(X)) = X$. Moreover, $X_1\subseteq X_2$ implies that $I(X_2) \subseteq I(X_1)$. 
A map $F: \C^n \to \C^m$ is called a \emph{polynomial map}, if there exist $f_1,\ldots,f_m$ in $\C[x_1,\ldots,x_n]$, such that $F(x) = (f_1(x),\ldots,f_m(x))$ for all $x \in \C^n$. For the sake of simplicity, in what follows we will refer to polynomial maps and their associated polynomials interchangeably.

\smallskip
We now define the main object introduced in this paper.
\begin{definition}\label{def:invariantset}
Let $F : \Cn \longrightarrow \Cn$ be a polynomial map and $X\subseteq\Cn$ an algebraic variety.
The invariant set of $(F,X)$ is defined as:
\begin{center}
    $S_{(F,X)} = \{x \in X \mid \forall m\in \N, F^{(m)}(x) \in X\},$
\end{center}
where $F^{(0)}(x)=x$ and $F^{(m)}(x)= F(F^{(m-1)}(x))$ for any $m>1$.
\end{definition}
\noindent\textbf{Our contributions.} In this work, we consider the problem of generating polynomial invariants for loops with polynomial maps of arbitrary degrees. Our contributions are as follows:

\hspace*{-0.9cm}
\parbox{1.05\linewidth}{
\begin{enumerate}[label=(\arabic*)]
\item We 
design Algorithm~\ref{algo1} to compute invariant sets and use it
to 
decide if a given polynomial is invariant (Proposition~\ref{prop3.4}).

\item We design two algorithms for computing 
polynomial invariants  of a loop up to a fixed degree. The first one (Theorem~\ref{algogeneral}), when the initial value is not fixed, outputs a linear parametrization which depends polynomially on this value. The second (Algorithm~\ref{alg2}), when the initial value is fixed, is much more efficient and computes a basis of this set, seen as a vector space. Experiments with our prototype implementation demonstrate the practical efficiency of our algorithms, solving problems beyond the current state of the art. 
Note that the algorithms can be adjusted to include disequalities in the guard;
see Remarks\,\ref{remark_disequality}\,and\,\ref{remark_disequality_fixed}.


\item 
Finally, we apply these algorithms to other problems: we show how to lift some polynomial invariants for the non-fixed initial value case from the fixed one (Proposition~\ref{general}); we consider the case with inequalities in the loop (Proposition~\ref{prop:saloop}).



\end{enumerate}
}

\noindent\textbf{Related works.}
A common approach for generating polynomial invariants entails creating a system of recurrence relations from a loop, acquiring a closed formula for this recurrence relation, and then computing polynomial invariants by removing the loop counter from the obtained closed formula (as in \cite{rodriguez2007automatic}). Note that it is straightforward to find such recursion formulas from a polynomial invariant. However, the reverse process is only feasible under very strong assumptions, as detailed in \cite{Unsolvableloops}. Specifically, one needs to identify polynomial relations among  program variables, which is a challenging task in itself. 

In \cite{hrushovski2018polynomial}, an algorithm is designed to compute the Zariski closure of points generated by affine maps. Another perspective, detailed in \cite{Unsolvableloops}, categorizes variables into effective and defective sets, where closed formulas can be computed for effective variables but not for defective variables. 
Similarly, the methodology proposed in \cite{kovacs2008reasoning} is specifically tailored for P-solvable loops.
In \cite{cyphert2023solvable}, the method of approximating a general program by a solvable program is discussed. 
This approach is incomplete, however, a monotonicity property is proven ensuring that such an approximation can be improved as much as required. 
In \cite{muller2004computing}, Müller-Olm and Seidl employ ideas similar to ours in Algorithm 1, with the notable difference that, through our geometric approach for computing the invariant set, we have established a better stopping criterion by comparing the equality of radical ideals rather than the ideals themselves. They also impose algebraic conditions on the initial values and subsequently compute polynomial invariants that need to apply to all initial values satisfying these constraints. Consequently, polynomial invariants that are applicable to all but e.g. finitely many such initial values might be overlooked.
In contrast, our algorithm 
outlined in Theorem 3.5, yields polynomial invariants that depend on the initial values, addressing a much broader problem, which, to our knowledge, has not been previously tackled.
Moreover, Algorithm 2, which tackles cases with fixed initial values, is significantly faster than Algorithm 1 and its counterpart in \cite{muller2004computing}. Additionally, in Proposition 4.1, we outline a comprehensive procedure that relies on Algorithm 2 to identify general invariants of specific form 
whenever they exist, enabling us to generate invariants produced in \cite{Unsolvableloops,amrollahi2023solvable}.

Finally, the case of polynomial invariants represented as inequalities has been considered in \cite{chatterjee2020polynomial}, using tools such as Putinar's Positivstellens\"atz. However, this presents a different problem, 
involving semi-algebraic sets $X$ whose image $F(X)$ is a subset of $X$. However, polynomial invariants do not necessarily satisfy this property. 

\section{Computing invariant sets}\label{sectioninvariantset}
We will first establish an effective description of the invariant set associated with a given algebraic variety and a polynomial map. Subsequently, we will derive an algorithm based on this description to compute such a set. 
We begin with a technical lemma to express the preimage of an algebraic variety under a polynomial map.
\begin{lemma}\label{lem:equtionspreimage}
Given a polynomial map $F: \mathbb{C}^n \longrightarrow \mathbb{C}^m$ and  an algebraic variety $X \subset \mathbb{C}^m$, the preimage $F^{-1}(X)$ is also an algebraic variety. Moreover, if $X=\V(g_1, \ldots, g_k)$ and $F = (f_{1}, \ldots, f_{m})$, where $g_1,\ldots,g_k\in \mathbb{C}[y_1, \ldots, y_m]$ and $f_1, \ldots, f_m \in \mathbb{C}[x_1, \ldots, x_n]$, then 
\[
F^{-1}(X) = \V(g_{1}(f_1, \ldots , f_m), \ldots, g_{k}(f_1, \ldots, f_m)) \subset \mathbb{C}^n. 
\]
\end{lemma}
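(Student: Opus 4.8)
The plan is to show the two claims in order: first that $F^{-1}(X)$ is an algebraic variety, and second the explicit description of its defining equations via composition. Both will follow from a single computation identifying which points $x\in\mathbb{C}^n$ satisfy $F(x)\in X$.

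First I would unwind the definitions. A point $x\in\mathbb{C}^n$ lies in $F^{-1}(X)$ precisely when $F(x)=(f_1(x),\ldots,f_m(x))\in X$. Since $X=\V(g_1,\ldots,g_k)$, the condition $F(x)\in X$ is equivalent to $g_j(F(x))=0$ for all $j=1,\ldots,k$. The key step is to observe that $g_j(F(x)) = g_j(f_1(x),\ldots,f_m(x))$, which is exactly the polynomial $g_j(f_1,\ldots,f_m)\in\mathbb{C}[x_1,\ldots,x_n]$ obtained by substituting the $f_i$ into $g_j$, evaluated at $x$. That this composition is again a polynomial in $x_1,\ldots,x_n$ is the only genuine content, and it is immediate: substituting polynomials into a polynomial yields a polynomial, so each $h_j := g_j(f_1,\ldots,f_m)$ lies in $\mathbb{C}[x_1,\ldots,x_n]$.

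Chaining these equivalences gives, for every $x\in\mathbb{C}^n$,
\[
x\in F^{-1}(X) \iff F(x)\in X \iff h_j(x)=0 \text{ for all } j \iff x\in\V(h_1,\ldots,h_k).
\]
Hence $F^{-1}(X)=\V(h_1,\ldots,h_k)=\V(g_1(f_1,\ldots,f_m),\ldots,g_k(f_1,\ldots,f_m))$, which simultaneously exhibits $F^{-1}(X)$ as the zero set of finitely many polynomials in $\mathbb{C}[x_1,\ldots,x_n]$, proving it is an algebraic variety, and establishes the claimed formula.

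I do not expect a serious obstacle here: the statement is essentially a definition-chase, and the only point requiring care is the bookkeeping of the substitution $g_j\circ F$, making sure the variables $y_1,\ldots,y_m$ of the $g_j$ are replaced by the polynomials $f_1,\ldots,f_m$ in the variables $x_1,\ldots,x_n$ so that the composite is an element of $\mathbb{C}[x_1,\ldots,x_n]$ rather than a formal expression. The finiteness of the generating set $\{g_1,\ldots,g_k\}$ is what guarantees the preimage is cut out by finitely many equations, which is implicitly what the word ``variety'' requires.
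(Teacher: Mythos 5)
Your proof is correct and follows essentially the same route as the paper's: both unwind the definition of the preimage, observe that each composite $h_j = g_j(f_1,\ldots,f_m)$ is again a polynomial in $\mathbb{C}[x_1,\ldots,x_n]$, and conclude $F^{-1}(X)=\V(h_1,\ldots,h_k)$. Your write-up is slightly more explicit about the chain of equivalences, but there is no substantive difference.
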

\vspace{-2mm}
\begin{proof}
By definition, we have that:
  \begin{align*}
    F^{-1}(X) 
 &= \{ x \in \C^n \mid \forall\, 1 \leq i \leq k,\, 
 g_i(f_1(x),\ldots,f_m(x)) = 0\}.
  \end{align*}
 Let $h_i =  g_i(f_1,\ldots,f_m)\in\C[x_1,\ldots,x_n]$, for all $1 \leq i \leq k$, then
 $
    F^{-1}(X) = \V(h_1,\ldots,h_k),
 $ 
 and $F^{-1}(X)$ is an algebraic variety.
\end{proof}

Before proving the main result, we need the following lemma.
\begin{lemma}\label{lem:invset}
Let $F : \Cn \longrightarrow \Cn$ be a polynomial map and $X\subseteq\Cn$ an algebraic variety. Then, $F(S_{(F,X)})$ is a subset of $S_{(F,X)}$.
\end{lemma}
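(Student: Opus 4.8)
The plan is to verify directly from Definition~\ref{def:invariantset} that every point in the image $F(S_{(F,X)})$ again satisfies the defining condition of $S_{(F,X)}$; this is essentially a membership check rather than a geometric argument. Concretely, I would fix an arbitrary point $y = F(x)$ with $x \in S_{(F,X)}$ and show that $y \in S_{(F,X)}$, i.e.\ that $y \in X$ and $F^{(m)}(y) \in X$ for every $m \in \N$.

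First I would record the key iteration identity: for any $x \in \Cn$ and any $m \in \N$, one has $F^{(m)}(F(x)) = F^{(m+1)}(x)$. This follows immediately from the recursive definition $F^{(m)} = F \circ F^{(m-1)}$ by a one-line induction on $m$, using the conventions $F^{(0)} = \mathrm{id}$ and $F^{(1)} = F$, and it is the only structural fact needed.

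With this identity in hand the verification is short. Since $x \in S_{(F,X)}$, we have $F^{(m)}(x) \in X$ for all $m$; in particular the case $m = 1$ gives $F(x) \in X$, so $y \in X$. Then for an arbitrary $m \in \N$, the identity gives $F^{(m)}(y) = F^{(m)}(F(x)) = F^{(m+1)}(x)$, which lies in $X$ because $x \in S_{(F,X)}$ and $m+1 \in \N$. Hence $y$ satisfies both defining conditions, so $y \in S_{(F,X)}$, proving $F(S_{(F,X)}) \subseteq S_{(F,X)}$.

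There is no serious obstacle here: the statement simply expresses the forward-invariance of $S_{(F,X)}$ under $F$, which is built into its definition through the universal quantifier over all iterates. The only point requiring any care is the index bookkeeping in the shift $m \mapsto m+1$, and making sure the conventions $F^{(0)} = \mathrm{id}$ and $F^{(1)} = F$ are applied consistently.
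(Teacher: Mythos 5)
Your proof is correct and follows essentially the same route as the paper's: a direct membership check from Definition~\ref{def:invariantset}, using the shift identity $F^{(m)}(F(x)) = F^{(m+1)}(x)$ to conclude that $F(x)$ satisfies the defining condition. The only difference is that you make this identity and the conventions $F^{(0)}=\mathrm{id}$, $F^{(1)}=F$ explicit, where the paper uses them implicitly.
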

\begin{proof}
 Let $x \in S_{(F,X)}$. By the definition of the invariant set, $F^{(m)}(x) \subseteq X$ for every $m$. Thus, $F^{(m)}(F(x)) \subseteq X$, implying that $F(x) \in S_{(F,X)}$ for every $x \in S_{(F,X)}$. Hence, $F(S_{(F,X)}) \subseteq S_{(F,X)}$.
\end{proof}

\vspace{-2mm} We now give an effective method to compute invariant sets, by means of a stopping criterion for the intersection of the iterated preimages. In the following, 
$(F^{(m)})^{-1}$ will be denoted by $F^{(-m)}$.
\begin{proposition}\label{prop:stabilization}
Let $F : \Cn \longrightarrow \Cn$ be a polynomial map and $X\subseteq\Cn$ an algebraic variety. We define $X_m=\bigcap\limits_{i=0}^{m}F^{-i}(X)$ for all $m \in \N$. Then, the following statements are true:
\begin{itemize}
        \item[$(a)$] $X_{m+1} \subseteq X_{m}$ for all $m$.
        \item[$(b)$] There exists 
        $N\in\mathbb{N}$ such that $X_{N} = X_{m}$ for all $m \geq N$.
        \item[$(c)$]  If $X_{N}=X_{N+1}$ for some $N$, then $X_{N}=X_{m}$ for all $m\geq N$.
        \item[$(d)$]  The invariant set $S_{(F,X)}$ is equal to $X_{N}$.
    \end{itemize}
\end{proposition}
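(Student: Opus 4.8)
The backbone of the whole argument is the recursive identity $X_{m+1} = X \cap F^{-1}(X_m)$, so the plan is to establish this first. Since taking preimages commutes with intersections and $F^{-1}(F^{-j}(X)) = F^{-(j+1)}(X)$ (because $F^{(j)}\circ F = F^{(j+1)}$), one gets $F^{-1}(X_m) = \bigcap_{j=0}^{m} F^{-(j+1)}(X) = \bigcap_{i=1}^{m+1} F^{-i}(X)$, and intersecting with the $i=0$ term $X$ yields the identity. With this in hand, part $(a)$ is immediate: $X_{m+1} = X_m \cap F^{-(m+1)}(X) \subseteq X_m$.

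For part $(b)$, the key point is that every $X_m$ is itself an algebraic variety. By Lemma~\ref{lem:equtionspreimage} each $F^{-i}(X)$ is a variety, and a finite intersection of varieties is again a variety, so $X_m$ is one. The chain $X_0 \supseteq X_1 \supseteq \cdots$ is therefore a descending chain of varieties; passing to defining ideals reverses inclusions and produces an ascending chain $I(X_0) \subseteq I(X_1) \subseteq \cdots$ in $\C[x_1,\ldots,x_n]$. Since this ring is Noetherian (Hilbert basis theorem), the ideal chain stabilizes at some $N$, and applying $\V(\cdot)$ together with $X_m = \V(I(X_m))$ gives $X_N = X_m$ for all $m \geq N$.

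Parts $(c)$ and $(d)$ then follow from the recursion. For $(c)$, I would argue by induction: if $X_{N+1} = X_N$, then $F^{-1}(X_{N+1}) = F^{-1}(X_N)$, hence $X_{N+2} = X \cap F^{-1}(X_{N+1}) = X \cap F^{-1}(X_N) = X_{N+1}$, and the same step propagates to all larger indices. For $(d)$, I would rewrite the invariant set directly: since $F^{(m)}(x) \in X$ if and only if $x \in F^{-m}(X)$, the definition gives $S_{(F,X)} = \bigcap_{m \geq 0} F^{-m}(X) = \bigcap_{m \geq 0} X_m$; because the descending chain stabilizes at $X_N$ by $(b)$ and $(c)$, this infinite intersection collapses to $X_N$.

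The one step that carries genuine mathematical content is part $(b)$: it is where Noetherianity of the polynomial ring is essential to guarantee that the descending chain of iterated preimages actually terminates. Everything else is formal bookkeeping with preimages and intersections, resting on the recursion $X_{m+1} = X \cap F^{-1}(X_m)$ and on recognizing $S_{(F,X)}$ as the full intersection of the iterated preimages. I would also note that $(c)$ is what makes the stopping criterion \emph{effective}: one need not know the theoretical bound $N$ from $(b)$, but may simply iterate until two consecutive terms coincide, and $(b)$ guarantees this occurs.
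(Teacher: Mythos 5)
Your proposal is correct, and parts $(a)$, $(b)$, $(c)$ follow essentially the same route as the paper: the same one-line argument for $(a)$, the same passage to the ascending chain of ideals and Noetherianity of $\C[x_1,\ldots,x_n]$ for $(b)$, and the same use of the recursion $X_{m+1} = X \cap F^{-1}(X_m)$ for $(c)$ (which you, unlike the paper, actually justify via $F^{-1}\bigl(F^{-j}(X)\bigr) = F^{-(j+1)}(X)$ --- a welcome bit of added rigor). Where you genuinely diverge is part $(d)$. The paper proves the inclusion $S_{(F,X)} \subseteq X_m$ by induction on $m$, and this induction relies on Lemma~\ref{lem:invset} (forward invariance $F(S_{(F,X)}) \subseteq S_{(F,X)}$) to push the invariant set through one more preimage; it then proves the reverse inclusion $X_N \subseteq S_{(F,X)}$ by a separate pointwise argument. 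You instead observe that the definition of the invariant set literally says
\[
S_{(F,X)} \;=\; \bigcap_{m \geq 0} F^{-m}(X) \;=\; \bigcap_{m \geq 0} X_m,
\]
and then collapse the stabilized infinite intersection to $X_N$. This is shorter, avoids both the induction and Lemma~\ref{lem:invset} entirely, and is arguably the more transparent way to see $(d)$; what the paper's route buys in exchange is that it isolates forward invariance as a standalone property of $S_{(F,X)}$, which is of independent conceptual interest. Both arguments are complete and correct.
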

\begin{proof}
 $(a)$ The following is straightforward from the definition: 
 $$X_{m+1}=X_m\cap F^{-(m+1)}(X)\subseteq X_m.$$
 
        \smallskip
        
        $(b)$ From $(a)$, we have the following descending chain 
        \[ 
            X_{0} \supseteq X_{1} \supseteq X_{2} \supseteq \cdots\supseteq X_{m} \supseteq X_{m+1} \supseteq \cdots,
        \]
       which are algebraic varieties by Lemma~\ref{lem:equtionspreimage}. 
       Thus, we have: 
       \[
        I(X_{0})\subseteq I(X_{1}) \subseteq I(X_{2})\subseteq \cdots \subseteq I(X_{m})\subseteq I(X_{m+1})\subseteq \cdots.
        \]
    Since $\C[x_{1},x_{2},\ldots,x_{n}]$ is a Noetherian ring, there exists a natural number $N$ such that $I(X_{N})=I(X_{m})$ for all $m\geq N$. Therefore, 
    $$X_N = \V(I(X_N)=\V(I(X_m))=X_m \text{ for all } m\geq N.$$

\smallskip 
$(c)$ For such an $N$, we have that 
\[
    X_{N+2} =X\cap F^{-1}(X_{N+1})=X\cap F^{-1}(X_N)=X_{N+1}.
 \] 
Thus,
$
X_{m}=X_{m+1}$ for all $m \geq N$,
and so 
$X_N = X_m$ for all $m\geq N$.

\smallskip
$(d)$ We will first prove that $S_{(F,X)}\subseteq X_{m}$ for every $m$, by induction on $m$. 
By the invariant set's definition, $S_{(F,X)}$ is a subset of $X=X_{0}$ which proves the base case $m=0$.
Now let $m>0$ and assume $S_{(F,X)}\subseteq X_{m-1}$.
By Lemma~\ref{lem:invset} and the induction hypothesis,
$$F(S_{(F,X)})\subset S_{(F,X)}\subset X_{m-1}.$$
Therefore, $S_{(F,X)}$ is a subset of $F^{-1}(X_{m-1})$. Note that $S_{(F,X)}$ is a subset of $X$ by the definition. Thus, 
$S_{(F,X)}\subset F^{-1}(X_{m-1})\cap X = X_{m}.$
In particular, when $m=N$, we have that $S_{(F,X)}\subseteq X_{N}$.

\smallskip
To prove the other inclusion, for every $x\in X_{m}$, note that $F^{m}(x)$ is contained in $X$ since $x\in X_m\subseteq F^{-m}(X)$. By $(a)$ and $(b)$, $X_{N}$ is contained in $X_{m}$ for every $m\in N$. Thus, $F^{m}(x)$ is contained in $X$ for every $x\in X_{N}$ and every $m\in \N$. Hence, $X_{N}\subseteq S_{(F,X)}$.
\end{proof}

\begin{samepage}
\begin{remark}\label{remark1}
  By Theorem~\ref{prop:stabilization}(d), the invariant set $S_{(F,X)}$ is an algebraic variety, since by construction each $X_{i}$ is an algebraic variety.
  By Theorem~\ref{prop:stabilization}(a), the ideal of $X_{j}$ is a subset of the ideal of $X_{i}$ for every $i \geq j$. Hence, although computing the ideal of $X_{N}$ where $X_{N}=X_{N+1}$ may be infeasible, leveraging computable $X_{i}$'s for $i<N$ provides partial information.
\end{remark}
\end{samepage}
\vspace{-1mm}
We now present an algorithm for computing the invariant set associated to an algebraic variety and a polynomial map, described by sequences of multivariate polynomials. We restrict here to \emph{rational} coefficients as this covers the target applications, and we need to work in a computable field for the sake of the effectiveness.

\begin{algorithm}
\caption{\InvariantSet}\label{algo1}
\begin{algorithmic}[1]\setstretch{1.1}
\Require Two sequences $\gb$ and $F = (f_1,\ldots, f_n)$ in $\mathbb{Q}[x_1,\ldots,x_n]$.
\Ensure Polynomials whose common zero-set is $S_{(F,{\V(\gb)})}$.
\State $S \gets \{\gb\};$
\State $\gbt \gets \compose(\gb,\,F);$
\While{ $\InRadical(\gbt ,\,S)==\texttt{False}$}
\State $S \gets S \cup \{\gbt\};$
\State $\gbt \gets \compose(\gbt,\,F);$
\EndWhile
\State \Return $S$;
\end{algorithmic}
\end{algorithm}
In Algorithm~\ref{algo1}, the procedure ``\compose'' takes as input two sequences of polynomials $\gb=(g_1,\dotsc,g_k)$ and $F=(f_1,\dotsc,f_n)$ in $\Q[x_1,\dotsc,x_n]$ and outputs a sequence of polynomials $(h_1,\dotsc,h_k)$ in $\Q[x_1,\dotsc,x_n]$, such that $h_i = g_i(f_1,\dotsc,f_n)$ for all $1\leq i \leq k$.


The procedure ``\InRadical'' takes as input a sequence $\gbt$ and a set $S$ both in $\Q[x_1,\dotsc,x_n]$ and decides if all the polynomials in $\gbt$ belong to the \emph{radical} of the ideal generated by $S$.
By \cite[Chap 4, \S 2, Proposition 8]{cox2013ideals}, the latter procedure can be performed by computing a Gr\"obner basis for the ideal of $\C[x_1,\dotsc,x_n,t]$, generated by $1-t\cdot \gbt$ and $S$, where $t$ is a new variable.

\smallskip
We now prove the termination and correctness of Algorithm~\ref{algo1}.

\begin{theorem}\label{thm:corralgo1}
    On input two sequences $\gb=(g_1,\ldots, g_k)$ and $F=(f_1,\ldots,f_n)$ of polynomials in $\Q[x_1,\ldots,x_n]$, Algorithm~\ref{algo1} terminates and outputs a sequence of polynomials whose vanishing set is the invariant set $S_{(F, \V(g_1,\ldots, g_k))}$. 
\end{theorem}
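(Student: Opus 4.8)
The plan is to track the state of the two variables $S$ and $\gbt$ through the iterations of Algorithm~\ref{algo1} and match its algebraic stopping criterion against the geometric stabilization already supplied by Proposition~\ref{prop:stabilization}. Writing $X=\V(\gb)$, I would first introduce the iterated compositions $g^{(0)}=\gb$ and $g^{(i+1)}=\compose(g^{(i)},F)$, and prove by induction on $i$, using the semantics of \compose together with Lemma~\ref{lem:equtionspreimage}, that $\V(g^{(i)})=F^{-i}(X)$; concretely $\V(g^{(i+1)})=F^{-1}(\V(g^{(i)}))=F^{-1}(F^{-i}(X))=F^{-(i+1)}(X)$. A routine loop-invariant argument then shows that each time the guard $\InRadical(\gbt,S)$ is evaluated one has $S=\{g^{(0)},\ldots,g^{(m)}\}$ and $\gbt=g^{(m+1)}$ for the current iteration index $m\geq 0$, so that $\V(S)=\bigcap_{i=0}^{m}\V(g^{(i)})=\bigcap_{i=0}^{m}F^{-i}(X)=X_{m}$, with $X_m$ as in Proposition~\ref{prop:stabilization}.

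The crucial step is to reinterpret the radical-membership test geometrically. I would argue that $\InRadical(g^{(m+1)},S)$ returns \texttt{True} exactly when every component of $g^{(m+1)}$ lies in $\sqrt{\langle S\rangle}$, which by Hilbert's Nullstellensatz equals $I(\V(S))=I(X_{m})$; equivalently, $g^{(m+1)}$ vanishes on $X_{m}$, i.e. $X_{m}\subseteq\V(g^{(m+1)})=F^{-(m+1)}(X)$. Since $X_{m+1}=X_{m}\cap F^{-(m+1)}(X)$, this condition holds if and only if $X_{m+1}=X_{m}$. One point to handle carefully is that \InRadical operates in $\Q[x_1,\ldots,x_n]$ whereas the Nullstellensatz is stated over $\C$; I would note that for $p\in\Q[x_1,\ldots,x_n]$ and a finite $S\subseteq\Q[x_1,\ldots,x_n]$, membership of $p$ in the radical of $\langle S\rangle$ is insensitive to the extension $\Q\subseteq\C$ (since $\langle S\rangle_{\C[x]}\cap\Q[x]=\langle S\rangle_{\Q[x]}$ by flatness), so the computed answer coincides with radical membership over $\C$.

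With this dictionary in place, termination and correctness fall out of Proposition~\ref{prop:stabilization}. For termination, part $(b)$ provides an $N$ with $X_{N}=X_{m}$ for all $m\geq N$; in particular $X_{N+1}=X_{N}$, so the guard must evaluate to \texttt{True} no later than the iteration with index $m=N$, and the while loop halts. For correctness, suppose the loop exits at index $m$, so $S=\{g^{(0)},\ldots,g^{(m)}\}$ and the guard gave $X_{m+1}=X_{m}$. By Proposition~\ref{prop:stabilization}$(c)$ this equality propagates to $X_{m}=X_{m'}$ for all $m'\geq m$, and by part $(d)$ the common stable value is exactly the invariant set, $X_{m}=S_{(F,X)}$. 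Hence the returned $S$ satisfies $\V(S)=X_{m}=S_{(F,\V(\gb))}$, as claimed.

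The only genuinely delicate point is the middle paragraph: correctly converting the purely algebraic guard \InRadical into the variety equality $X_{m+1}=X_{m}$ (through the Nullstellensatz, and with due care about the $\Q$ versus $\C$ radical), which is precisely what lets the stabilization statement take over. Everything else is bookkeeping on the loop state and a direct appeal to Proposition~\ref{prop:stabilization}.
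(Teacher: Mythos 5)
Your proposal is correct and follows essentially the same route as the paper's own proof: identify the loop state with $\V(S_m)=X_m$ via Lemma~\ref{lem:equtionspreimage}, translate the \InRadical guard into the equality $X_{m+1}=X_m$ through Hilbert's Nullstellensatz, and invoke Proposition~\ref{prop:stabilization}(b)--(d) for termination and correctness. Your two refinements --- stating the guard translation explicitly as an equivalence so that early exit is handled cleanly, and justifying that radical membership over $\Q$ agrees with that over $\C$ --- are points the paper passes over silently, but they do not change the underlying argument.
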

\begin{proof}
Consider the algebraic variety $V=\V(\gb)$ and the polynomial map $F=(f_1,\ldots,f_n):\Cn\longrightarrow \Cn$. Let $S_0 = \gb$, and let $S_m$ denote the set $S$ after completing $m\geq 1$ iterations of the \textbf{while} loop in Algorithm~\ref{algo1}. Similarly, let $\gbt_0=\gb$, $\gbt_1=\gb(F)$, and $\gbt_{m+1}$ be the sequence $\gbt$ after $m$ iterations.
Let $m\geq0$, and as in Proposition~\ref{prop:stabilization}, let $X_m=\bigcap\limits_{i=1}^{m}F^{-i}(X)$.
By construction,
$
S_{m} \,=\, \left\{\gbt_0,\dotsc, \gbt_{m}\right\} $, that is $S_m\,=\,\left\{\gb, \gb(F),\ldots, \gb(F^{m})\right\}
$, and 
so by Lemma~\ref{lem:equtionspreimage},
\begin{equation*}
    X_{m} \,=\, \bigcap\limits_{i=0}^{m}F^{-i}(\V(\gb)) \,=\, \bigcap\limits_{i=0}^{m}\V(\gb(F^{i}))
    \,=\, \V(S_{m}).
\end{equation*} 
By Proposition~\ref{prop:stabilization}.(b), there exists $N\in\N$ such that $X_{N}=X_{N+1}$, that is $\V(S_N)=\V(S_{N+1})$. 
This means that the polynomial 
$\gbt_{N+1} = \gb(F^{N+1})$ vanishes on $\V(S_N)$, or equivalently by the Hilbert's Nullstellensatz \cite[Chap 4, \S 1, Theorem 2]{cox2013ideals}, that $\gbt_{N+1}$ belongs to $\sqrt{I(S_N)}$.

Hence, Algorithm~\ref{algo1} terminates after $N$ iterations of the \textbf{while} loop and outputs $S_N$. In particular, by Proposition~\ref{prop:stabilization}.(d), 
$S_{(F,X)} = X_N = \V(S_N)$, which proves the correctness
of Algorithm~\ref{algo1}.
\end{proof}

\begin{remark}The complexity analysis of Algorithm~\ref{algo1} is not detailed in this paper, as the worst-case complexity bounds given by the literature are very pessimistic. The first main issue concerns the number of loop iterations performed by Algorithm~\ref{algo1}, which can exhibit a growth behavior similar to Ackermann’s function \cite{MorenoSocias_1992,pastuszak2020ascending}. Furthermore, the radical membership test after each iteration involves Gröbner bases computations, which can have a complexity doubly exponential in the number of variables for some tailored examples \cite{mayr1982complexity}. In practice, as discussed in, for example, \cite[\S 21.7]{von2013modern}, these algorithms show reasonable costs and benefit from active research \cite{eder2017survey} and efficient implementations \cite{berthomieu2021msolve}.

Despite all of this, the experimental section shows that this algorithm can be applied in practice to loops found in the literature. Moreover, future work includes accounting for the specific structure of loops 
to enhance both practical and theoretical efficiency.
\end{remark}
\vspace{-3.5mm}
\section{Generating polynomial loop invariants}\label{sectiongenerating}

We first fix our notation throughout this section. In the polynomial ring $\C[x_1, \dots , x_n]$, we fix the notation ${\bf x}^\alpha$ with $\alpha=(a_1,\ldots,a_n)\in \mathbb Z_{\ge 0}^n$ denoting the monomial $x_1^{a_1}\dots x_n^{a_n}$. 
Throughout when we write $f= b_{1}{\bf x}^{\alpha_{1}}+\cdots+b_{m}{\bf x}^{\alpha_{m}}$, we refer to the expression of $f$ in the basis of monomials, where $f$ consists of exactly $m$ terms (or monomials) $b_i{\bf x}^{\alpha_{i}}$ with coefficients $b_i\in \mathbb{C}$.
In our polynomial expression, we always order the monomials such that for $i < j$:
\[
\deg({\bf x}^{\alpha_i}) < \deg({\bf x}^{\alpha_j}) \text{ or } \left(\deg({\bf x}^{\alpha_i}) = \deg({\bf x}^{\alpha_j}) \text{ and } {\bf x}^{\alpha_i} >_{\text{lex}} {\bf x}^{\alpha_j}\right)
\]
where $\deg({\bf x}^{\alpha_i})$ represents the degree of the monomial ${\bf x}^{\alpha_i}$, and the lexicographic order is with respect to the order of the variables $x_1 > x_2 > \cdots > x_n$. We also denote $|\alpha_i|$ for the size of the vector $\alpha_i=(\alpha_{i,1},\ldots,\alpha_{i,n})$ which is $\alpha_{i,1}+\cdots+\alpha_{i,n}$. 
 

\subsection{The general case}
\begin{definition}\label{def:loop}
Let $\ab\in\mathbb{C}^n$, $\gb=(g_1,\ldots, g_k)$ and $F=(f_1,\ldots,f_n)$ be two sequences of polynomials in $\C[x_1,\ldots,x_n]$. Consider the algebraic variety $X=\V(\gb)$ and the polynomial map $F= (f_1,\ldots,f_n)$.
Then $\mathcal{L}(\mathbf{a},\gb,F)$ (or $\mathcal{L}(\ab,X,F)$) denotes the polynomial loop on Page~\ref{page:alg}.

 \noindent  
When no $\bf g$ is identified, we will write $\mathcal{L}(\mathbf{a},0,F)$.
Finally, we will simply write $\mathcal{L}$ when it is clear from the context.
\end{definition}

\begin{proposition}\label{prop:terminateinvariant}
    Let $\ab\in\mathbb{C}^n$, $X$ be an algebraic variety and $F:\C^n\to \C^n$ a polynomial map. 
    Then, the polynomial loop $\mathcal{L}(\ab,X,F)$ never terminates if and only if $\ab \in S_{(F, X)}$.
\end{proposition}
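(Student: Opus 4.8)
The plan is to unwind the definitions of loop termination and of the invariant set, and to observe that these two notions are in fact recording the same information. The polynomial loop $\mathcal{L}(\ab,X,F)$ with $X=\V(\gb)$ starts at $\ab$ and, at each iteration, tests whether the current point lies in $X$ (the guard $g_1=\cdots=g_k=0$ being exactly the condition for membership in $X=\V(\gb)$); if so, it applies $F$ and repeats. Thus the sequence of points visited by the loop is precisely $\ab, F(\ab), F^{(2)}(\ab), \dots$, and the loop performs the $(m+1)$-th iteration if and only if all the earlier iterates $F^{(i)}(\ab)$ for $0 \le i \le m$ have satisfied the guard, i.e.\ have belonged to $X$.

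First I would make this precise by a short induction: the loop executes indefinitely (never terminates) if and only if, for every $m\in\N$, the iterate $F^{(m)}(\ab)$ satisfies the guard, that is $F^{(m)}(\ab)\in X$. The forward direction is immediate, since if the loop never terminates then it must pass the guard check at every step, and the point tested at step $m$ is exactly $F^{(m)}(\ab)$. For the converse, if $F^{(m)}(\ab)\in X$ for all $m$, then the guard is satisfied at each iteration, so the loop never exits.

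The final step is simply to match this with Definition~\ref{def:invariantset}. By definition, $\ab\in S_{(F,X)}$ means precisely that $\ab\in X$ and $F^{(m)}(\ab)\in X$ for all $m\in\N$; but this is exactly the condition ``$F^{(m)}(\ab)\in X$ for all $m$'' established above (the case $m=0$ recovering $\ab\in X$ since $F^{(0)}(\ab)=\ab$). Hence the loop never terminates if and only if $\ab\in S_{(F,X)}$.

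I do not anticipate a serious obstacle here, as the statement is essentially a reformulation of the definitions; the only point requiring care is to articulate cleanly the bookkeeping correspondence between ``the guard is passed at iteration $m$'' and ``$F^{(m)}(\ab)\in X$,'' making sure the indexing of iterates is consistent with $F^{(0)}(\ab)=\ab$ being the tested value at the very first guard check. Once that indexing is pinned down, both implications follow directly, and the equivalence with membership in $S_{(F,X)}$ is then transparent from Definition~\ref{def:invariantset}.
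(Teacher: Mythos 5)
Your proposal is correct and follows essentially the same route as the paper: the paper's proof likewise observes that non-termination of $\mathcal{L}(\ab,X,F)$ means exactly that $F^{(m)}(\ab)\in X$ for all $m\geq 0$, which is by definition membership of $\ab$ in $S_{(F,X)}$. The only difference is presentational --- the paper compresses this into one sentence, while you spell out the induction and the indexing of iterates, which is a harmless (and arguably clearer) elaboration.
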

\begin{proof} 
The statement directly follows from the definition, as $\mathcal{L}(\ab,X,F)$ never terminates if, and only if, $F^{(m)}(\ab)\in X$ for all $m\geq 0$, 
that is, if and only if $\ab\in S_{(F,X)}$.
\end{proof}

\vspace{-3mm}\begin{example} 
Let us compute the termination condition for the following loop $\mathcal{L}$ where $F=(f_1,f_2)=(10x_1-8x_2, 6x_1-4x_2)$, $g=x_1^2-x_1x_2+9x_1^3-24x_1^2x_2+16x_1x_2^2$, and $X=\V(g)$.

\programbox[0.53\linewidth]{
\State$(x_1, x_2)=(a_1,a_2)$
\While{$g=0$}
\State $\begin{pmatrix}
x_1 \\
x_2
\end{pmatrix}
\xleftarrow{\textbf{F}}
\begin{pmatrix}
10x_1-8x_2\\
6x_1-4x_2
\end{pmatrix}
$
\EndWhile
}
\\
Algorithm~\ref{algo1} computes the invariant set 
through the following steps:
\begin{itemize}
\item Initially, $S$ is set to ${g}$, and $\tilde{g}=\compose(g,F)=360x_1^3-1248x_1^2x_2+40x_1^2+1408x_1x_2^2-72x_1x_2-512x_2^3+32x_2^2$.
\item By computing a Gr\"obner basis for the ideal generated by $g$ and $1-t\tilde{g}$, it is determined that $\InRadical(\tilde{g}, S)=\text{False}$.
\item The set $S$ is then updated to include $\tilde{g}$, resulting in $S=S\cup\{\tilde{g}\}=\{x_1^2-x_1x_2+9x_1^3-24x_1^2x_2+16x_1x_2^2,\, 360x_1^3-1248x_1^2x_2+40x_1^2+1408x_1x_2^2-72x_1x_2-512x_2^3+32x_2^2\}$, and $\tilde{g}$ is recomputed as $\compose(\tilde{g},F)=7488x_1^3-26880x_1^2x_2+832x_1^2+31744x_1x_2^2-1600x_1x_2-12288x_2^3+768x_2^2$.
\item This time, the computation yields $\InRadical(\tilde{g}, S)=\text{True}$.
\end{itemize}
Thus, the output of Algorithm~\ref{algo1} is given by
$\{x_1^2-x_1x_2+9x_1^3-24x_1^2x_2+16x_1x_2^2\textbf{,}60x_1^3-1248x_1^2x_2+40x_1^2+1408x_1x_2^2-72x_1x_2-512x_2^3+32x_2^2\}.$
The radical of the ideal generated by this output is generated by 
$h:=x_1-x_2-9x_1^2+24x_1x_2-x_2^2$. Consequently, by Proposition~\ref{prop:terminateinvariant}, $\mathcal{L}$ never terminates if and only if $(a_1, a_2)\in \V(h)$. 
\end{example}


\vspace{-2mm}
\begin{definition}
Polynomial invariants of a loop $\mathcal{L}$ are polynomials that vanish before and after every iteration of $\mathcal{L}$. 
The set $I_{\mathcal{L}}$ of all polynomial invariants for $\mathcal{L}$ is an ideal, called the \emph{invariant~ideal~of~$\mathcal{L}$}.
Let $d \geq 1$, the subset $I_{d,\mathcal{L}}$ of all polynomial invariant for $\mathcal{L}$, of total degree $\leq d$, is called the dth \emph{truncated invariant ideal of $\mathcal{L}$}.
\end{definition}

Though a truncated invariant ideal is not an ideal, it has the structure of a finite~dimensional~vector~space. Hence, it can be uniquely parametrized by a system of linear equations, whose coefficients depend on the initial values. 
In the following, we demonstrate how to reduce the computation of such a parametrization for a given loop to computing an invariant set of an extended polynomial map. 

\smallskip
We start by a criterion to determine whether a given polynomial is invariant with respect to a given loop or not.





\begin{proposition}\label{prop3.4}
Let ${\bf a} \in \C^n$ and $F=(f_1,\ldots,f_n) \subset \C[x_1,\ldots,x_n]$.
For $m = \binom{n+d}{d}$, let ${\bf b} \in \C^m$ and $g({\bf x, y}):=\sum_{ |\alpha_i|\leq d}y_{i}{\bf x}^{\alpha_i}$ be a degree $d$ polynomial in $\C[x_1,\ldots,x_n, y_1,\ldots,y_m]$. 
Then $g(\xb,{\bf b})$ is a polynomial invariant for $\mathcal{L}({\bf a},0,F)$
if, and only if,  $({\bf a, b}) \in S_{(F_{m},X)}$
where 
 $F_m = (f_1,\ldots,f_n,y_1,\ldots,y_m)$ and $X = \V(g) \subset \C^{n+m}$.
\end{proposition}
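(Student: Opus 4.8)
The plan is to unwind both sides of the claimed equivalence to the same condition on the forward orbit of $(\ab,\bb)$ under the extended map $F_m$, and then invoke Proposition~\ref{prop:terminateinvariant} (or directly Definition~\ref{def:invariantset}). The guiding observation is that the extra coordinates $y_1,\dots,y_m$ are fixed points of $F_m$, since the last $m$ components of $F_m$ are the projections $y_1,\dots,y_m$. Hence for every $\alpha\in\N$, \[ F_m^{(\alpha)}(\ab,\bb)=\bigl(F^{(\alpha)}(\ab),\,\bb\bigr). \] First I would record this identity carefully by induction on $\alpha$: the base case is immediate, and in the inductive step the first $n$ coordinates evolve by $F$ while the last $m$ coordinates are unchanged because they are carried along as constants $\bb$.

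Next I would translate the meaning of ``$g(\xb,\bb)$ is a polynomial invariant for $\mathcal{L}(\ab,0,F)$''. By the definition of invariant polynomials, $g(\xb,\bb)$ must vanish before and after every iteration of the loop, i.e.\ $g\bigl(F^{(\alpha)}(\ab),\bb\bigr)=0$ for all $\alpha\in\N$ (with $g(\xb,\bb)$ understood as the specialization at $\yb=\bb$). This is precisely the statement that $F^{(\alpha)}(\ab)\in\V\bigl(g(\xb,\bb)\bigr)$ for all $\alpha$. On the other side, by Definition~\ref{def:invariantset}, $(\ab,\bb)\in S_{(F_m,X)}$ means that $F_m^{(\alpha)}(\ab,\bb)\in X=\V(g)$ for all $\alpha\in\N$, where here $g=g(\xb,\yb)$ is viewed as a polynomial in all $n+m$ variables.

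The crux is then to match these two conditions using the fixed-point identity above. Since $F_m^{(\alpha)}(\ab,\bb)=\bigl(F^{(\alpha)}(\ab),\bb\bigr)$, the membership $F_m^{(\alpha)}(\ab,\bb)\in\V(g)$ is equivalent to $g\bigl(F^{(\alpha)}(\ab),\bb\bigr)=0$, which is exactly the vanishing condition characterizing $g(\xb,\bb)$ as an invariant. Quantifying over all $\alpha\in\N$ yields the two directions simultaneously, giving the desired equivalence. I would present this as a single chain of equivalences once the orbit identity is established.

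The main obstacle, such as it is, is conceptual bookkeeping rather than any hard computation: one must be careful to distinguish the polynomial $g(\xb,\yb)\in\C[x_1,\dots,x_n,y_1,\dots,y_m]$ defining $X$ from its specialization $g(\xb,\bb)\in\C[x_1,\dots,x_n]$ defining the loop invariant, and to verify that evaluating $g$ at the point $\bigl(F^{(\alpha)}(\ab),\bb\bigr)$ coincides with evaluating the specialized polynomial $g(\cdot,\bb)$ at $F^{(\alpha)}(\ab)$. This is immediate since substitution commutes in the obvious way, but it is the only place where one could slip. I would also double-check that $m=\binom{n+d}{d}$ correctly indexes all monomials $\xb^{\alpha_i}$ of degree $\le d$, so that $g$ is the generic degree-$d$ polynomial and every degree-$\le d$ invariant arises as some specialization $g(\xb,\bb)$; this guarantees the proposition captures all invariants of bounded degree rather than a restricted family.
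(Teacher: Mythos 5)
Your proposal is correct and follows essentially the same route as the paper: both hinge on the observation that the $y$-coordinates are fixed by $F_m$, so that $F_m^{(k)}(\ab,\bb)=\bigl(F^{(k)}(\ab),\bb\bigr)$, and then identify the vanishing of $g$ along this orbit with the invariance of $g(\xb,\bb)$. The only cosmetic difference is that the paper packages the orbit condition as non-termination of the extended loop $\mathcal{L}((\ab,\bb),g,F_m)$ and cites Proposition~\ref{prop:terminateinvariant}, whereas you unwind Definition~\ref{def:invariantset} directly --- the two are interchangeable.
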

\begin{proof}
Consider the following ``mth extended'' loop $\mathcal{L}((\ab,\bb),g,F_{m})$:

    \programbox[0.8\linewidth]{
\State$(\xb, \yb)=(\ab, \bb)$
\While{$g(\xb,\yb)=0$}
\State $\begin{pmatrix}
x_1 \\
\vdots \\
x_n\\
y_1\\
\vdots\\
y_m
\end{pmatrix}
\longleftarrow 
\begin{pmatrix}
f_1(x_1,\ldots,x_n)\\
\vdots\\
f_n(x_1,\ldots,x_n)\\
y_1\\
\vdots\\
y_m
\end{pmatrix}
$
\EndWhile
}

Let $\ab^0=\ab$ and for $k\geq 1$ let $\ab^k = F(\ab^{k-1})$. Then, $(\ab^k)_{k\in \N}$ are the successive values of $\xb$ in $\mathcal{L}(\ab,0,F)$.
Let $\bb \in C^m$ and assume that $g(\xb,\bb)$ is a polynomial invariant for $\mathcal{L}(\ab,0,F)$. Let $k\geq0$, then after the $k^{\text{th}}$ iteration of the extended loop $\mathcal{L}((\ab,\bb),g,F_{m})$, the value of $\xb$ is $\ab^k$ and the value of $\yb$ is still $\bb$. Since, by assumption $g(\ab^k,\bb)=0$, this loop does not stop after the $k^{\text{th}}$ iteration and, by induction never terminates. The converse is immediate.

Therefore, $g(\xb,\bb)$ is a polynomial invariant for $\mathcal{L}$ if and only if the extended loop $\mathcal{L}_{m}$ never terminates, which is equivalent, by Proposition~\ref{prop:terminateinvariant}, to $(\ab,\bb)\in S_{(F_{m},\V(g))}$.
\end{proof}

\noindent The following main result follows from the above criterion.
\begin{theorem}\label{algogeneral}
 Let $F=(f_1,\ldots,f_n)$ be a sequences of polynomials in $\Q[x_1,\ldots,x_n]$ and let $d \geq 1$ and $m = \binom{n+d}{d}$.  Then, there exists an algorithm \textsf{TruncatedInvariant} which, on input $(F,d)$ computes a polynomial matrix $A$, with $m$ columns, and coefficients in  {$\Q[x_1,\ldots,x_n]$}, such that 
the dth truncated invariant ideal of $\mathcal{L}(\ab,0,F)$~for any $\ab \in \Q^n$ is: 
 \[
  I_{d,\mathcal{L}} = 
  \left\{ \sum_{ |\alpha_i|\leq d}b_{i}{\bf x}^{\alpha_i} \mid (b_1,\ldots,b_m) \in \ker\,A(\ab) \right\}
 \]
  where~$\ker\,A(\ab)$~is~right-kernel of $A$, whose entries~are~evaluated~at~$\ab$.
\end{theorem}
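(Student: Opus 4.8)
The plan is to reduce the statement directly to the invariant-set machinery of Section~\ref{sectioninvariantset} through the criterion of Proposition~\ref{prop3.4}, and then to exploit the very special shape of the extended map $F_m$. Write $g(\xb,\yb)=\sum_{|\alpha_i|\leq d}y_i\xb^{\alpha_i}$ for the generic degree-$d$ polynomial, so that, by Proposition~\ref{prop3.4}, a polynomial of degree $\leq d$ is an invariant of $\mathcal{L}(\ab,0,F)$ exactly when its coefficient vector $\bb$ satisfies $(\ab,\bb)\in S_{(F_m,\V(g))}$. The algorithm \textsf{TruncatedInvariant} will therefore call \InvariantSet on the pair $(g,F_m)$, working over the ring $\Q[x_1,\ldots,x_n,y_1,\ldots,y_m]$, and repackage its output as the matrix $A$. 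Termination of this call is already guaranteed by Theorem~\ref{thm:corralgo1}, so the only thing left to establish is that the output has the claimed linear-algebraic structure, uniformly in $\ab$.

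The key structural observation is that $F_m$ fixes every $y_j$, that is $F_m^{(i)}(\xb,\yb)=(F^{(i)}(\xb),\yb)$, and that $g$ is homogeneous of degree one in the $\yb$-variables. Hence each iterated composition remains linear in $\yb$:
\[
g(F_m^{(i)})=g\bigl(F^{(i)}(\xb),\yb\bigr)=\sum_{j=1}^{m}\bigl(F^{(i)}(\xb)\bigr)^{\alpha_j}\,y_j,
\]
whose $\yb$-coefficients $\bigl(F^{(i)}(\xb)\bigr)^{\alpha_j}$ lie in $\Q[x_1,\ldots,x_n]$. By Theorem~\ref{thm:corralgo1}, \InvariantSet halts after some $N$ iterations and returns $S_N=\{g,g(F_m),\ldots,g(F_m^{(N)})\}$ with $S_{(F_m,\V(g))}=\V(S_N)$. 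I would then define $A$ to be the $(N+1)\times m$ matrix whose $(i,j)$ entry is the coefficient $\bigl(F^{(i)}(\xb)\bigr)^{\alpha_j}$ read off from $g(F_m^{(i)})$; by construction this $A$ depends only on $(F,d)$, not on any initial value.

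It remains to identify the fibre of $S_{(F_m,\V(g))}$ over a point $\ab$ with $\ker A(\ab)$. For fixed $\ab\in\Q^n$ and any $\bb\in\C^m$, evaluating the generators of $S_N$ gives $g(F_m^{(i)})(\ab,\bb)=\sum_j \bigl(F^{(i)}(\ab)\bigr)^{\alpha_j}b_j=(A(\ab)\,\bb)_i$, so $(\ab,\bb)\in\V(S_N)=S_{(F_m,\V(g))}$ if and only if $A(\ab)\,\bb=0$, i.e. $\bb\in\ker A(\ab)$. Combining this equivalence with Proposition~\ref{prop3.4}, and with the fact that $\bb\mapsto g(\xb,\bb)$ is a linear isomorphism from $\C^m$ onto the polynomials of degree $\leq d$ (the $m=\binom{n+d}{d}$ monomials $\xb^{\alpha_i}$ forming a basis), yields exactly the stated description of $I_{d,\mathcal{L}}$.

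The whole computation, including the termination index $N$, is carried out generically in $\Q[\xb,\yb]$ and is therefore independent of $\ab$; this uniformity, together with the preservation of $\yb$-linearity under composition with $F_m$, is the crux of the argument and the point I expect to need the most care. Everything else is either inherited from Theorem~\ref{thm:corralgo1} (termination and $S_{(F_m,\V(g))}=\V(S_N)$) or a direct consequence of the linearity in $\yb$ (the passage from $\V(S_N)$ to a kernel).
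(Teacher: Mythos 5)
Your proposal is correct and follows essentially the same route as the paper's own proof: invoke Proposition~\ref{prop3.4} to reduce invariance to membership of $(\ab,\bb)$ in $S_{(F_m,\V(g))}$, run Algorithm~\ref{algo1} on $(g,F_m)$ (termination and correctness from Theorem~\ref{thm:corralgo1}), and use the fact that $F_m$ fixes the $y_j$'s so each iterate $g(F_m^{(i)})$ stays linear in $\yb$, yielding the matrix $A$ over $\Q[x_1,\ldots,x_n]$ whose kernel at $\ab$ cuts out $I_{d,\mathcal{L}}$. Your write-up is in fact slightly more explicit than the paper's (giving the entries $\bigl(F^{(i)}(\xb)\bigr)^{\alpha_j}$ and the fibre identification in detail), but the argument is the same.
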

\begin{proof}
    Let $y_1,\ldots,y_m$ be new indeterminates, and define $g$, $F_m$ and $X$ as in Proposition~\ref{prop3.4}. Then, by Proposition~\ref{thm:corralgo1}, on input $(g,F_m)$, Algorithm~\ref{algo1} computes polynomials $h_1,\dotsc,h_N$ in $\Q[x_1,\ldots,\\ x_n,y_1,\dotsc,y_n]$, whose common vanishing set is $S_{(F_m,X)}$.
    Moreover, by construction of Algorithm~\ref{algo1} and definition of $F_m$, we have:
    \[
        h_j = g\circ F_m^j(x_1,\ldots,x_n,y_1,\dotsc,y_m) = g\Big(F^j(x_1,\ldots,x_n),y_1,\dotsc,y_m\Big)
    \] 
     for $0\leq j \leq N$. Thus, $h_j$'s are linear in the $y_i$'s, and there exists a matrix $A$ with $m$ columns and coefficients in $\Q[x_1,\ldots,x_n]$~such~that 
    \begin{equation}\label{eqn:matrix}
    \begin{bmatrix}
        h_1\cdots
        h_m 
    \end{bmatrix}^t
    =
    A \cdot 
    \begin{bmatrix}
        y_1\cdots
        y_m   
    \end{bmatrix}^t
    \end{equation}
    Let $\bb \in \Q^n$, by Proposition~\ref{prop3.4}, $g(\xb,\bb)$ is a polynomial invariant of $\mathcal{L}(\ab,0,F)$ if, and only if, $(\ab,\bb)\in S_{(F_m,X)}$, that is, by \eqref{eqn:matrix}, if and only if $A(a_1,\dotsc,a_n)\cdot \bb = 0$. Since any polynomial in $\Q[x_1,\ldots,x_n]$ can be written as $g(\xb,\bb)$, for some $\bb \in \Q^m$, we are done.
\end{proof}

\begin{remark}\label{remark_disequality}
We can add disequalities of form $p(x) \neq 0$ in the guard loop as in \cite{muller2004computing}. 
Indeed, by applying Algorithm~\ref{algo1} to $(p\cdot g,F_m)$ instead of $(g,F_m)$ in the above proof,  
this implies that at each iteration, either $p(x)=0$ and the loop terminates (and so does Algorithm~\ref{algo1}) or we add the usual constraints given by the polynomial map.
\end{remark}

\vspace{-3mm}
\begin{example}\label{ex2} 
We consider the following loop $\mathcal{L}$ from \cite{hrushovski2018polynomial}.

\programbox[0.55\linewidth]{
\State$(x_1, x_2)=(a_1,a_2)$
\While{true}
\State $\begin{pmatrix}
x_1 \\
x_2
\end{pmatrix}
\longleftarrow
\begin{pmatrix}
10x_1-8x_2\\
6x_1-4x_2
\end{pmatrix}
$
\EndWhile
}

We proceed to compute the second truncated polynomial ideal for $\mathcal{L}$ using the algorithm outlined in the proof of Theorem~\ref{algogeneral}. Some of the polynomial invariants for this loop have been computed in \cite{hrushovski2018polynomial} for specific initial values, and are used to verify the non-termination of the linear loop with the assignment $``2x_2-x_1\geq -2"$. In our analysis, we extend this validation by computing \emph{all} polynomial invariants up to degree 2 for \emph{arbitrary} initial value.

We first run Algorithm~\ref{algo1} on input $F_6=(10x_1-8x_2,6x_1-4x_2,y_1,\ldots,y_6)$, and 
$g = y_1+y_2x_1+y_3x_2+y_4x_1^2+y_5x_1x_2+y_6x_2^2$ where the $y_i$'s are new variables. The output is 
polynomials $h_1,\dotsc,h_5$ in $\Q[x_1,x_2,y_1,\ldots,y_6]$
whose common zero set is $S_{(F_{6},X)} \subset \C^8$. 

As the $h_i$'s are linear in the $y_j$'s, we can write them as the product of the matrix
\[ \scriptsize{\arraycolsep=0.5\arraycolsep
\begin{bmatrix}
        1 & 0 & 0 & 0&  0& 0\\
        0 & 3x_1-4x_2 & 3x_1-4x_2 & 0&
         0& 0\\
        0& 64x_2& 112x_2-48x_1& 48x_2^2&
         84x_2^2-36x_1x_2&27x_1^2-126x_1x_2+147x_2^2\\
         0& 32x_2 & 56x_2-24x_1 & 24x_1x_2& -9x_1^2+21x_1x_2+12x_2^2& -18x_1x_2+42x_2^2\\
         0& 4x_2& 7x_2-3x_1& 3x_1^2&3x_1x_2 &3x_2^2
   \end{bmatrix}
      }\]
by the vector whose entries are the $y_1,\ldots,y_6$. This matrix is the output of the procedure \textsf{TruncatedInvariant} given in Theorem~\ref{algogeneral}.
Here, we actually show a reduced version of this matrix for clarity reasons: we used the ``\texttt{trim}'' command from Macaulay~\cite{M2}, to find smaller generators for the ideal generated by the $g_i$'s. 

Actually, from this output we can go further by computing an explicit basis for the corresponding vector space of $I_{2,\mathcal{L}}$. This is done by computing a basis for the kernel of the above matrix, depending of the possible values for $(a_1,a_2)$.
Performing Gaussian elimination on this matrix, we are led to consider the following four cases:
\begin{center}
\scalebox{0.9}{\begin{tabular}{ |c|c| } 
\hline
Initial values & Basis of $I_{2,\mathcal{L}}$\\
\hline
$a_1=a_2=0$ & $\{x_1,x_2,x_1x_2,x_1^2,x_2^2\}$ \\ \hline
$a_1=a_2\neq 0$ &  $\{x_1-x_2, x_1^2-x_1x_2, -x_1x_2+x_2^2\}$  \\ \hline
$a_1=\frac{4}{3}a_2\neq 0$ & $\{3x_1-4x_2, -3x_1^2+16x_1x_2-16x_2^2, -3x_1x_2+4x_2^2\}$\\ \hline
$a_1\neq\frac{4}{3}a_2$, & $\{(3a_{1}-4a_{2})^{2}x_1-(3a_{1}-4a_{2})^{2}x_2-9(a_1-a_2)x_1^2$\\ 
$a_1 \neq a_2$ & $+24(a_1-a_2)x_1x_2-16(a_1-a_2)x_2^2\}$
\\
\hline
\end{tabular}}
\end{center}

It is remarkable that in the first three cases, the truncated invariant ideal does not depend on the initial value. This is because they correspond to degenerate cases where the initial values are not generic; that is, they lie in a proper algebraic variety of $\mathbb{C}^2$. However, the last case is generic, and the output depends on the initial values. This is the output one would obtain by running Gauss elimination on the above polynomial matrix in the field of rational fractions in the $x_i$'s. However, such a computation is not tractable in general, as the size of the expressions increases quickly.
\end{example}



\subsection{Loops with given initial value}
While the algorithm outlined in Theorem~\ref{algogeneral} addresses the most general case, in practice, it quickly becomes impractical, even for small inputs. In this section, we focus on the particular case where the initial values of the loops are fixed and design an adapted algorithm that is more efficient for this scenario. We will see in Section~\ref{sec:lift} that the solution to this specific problem can be used to partially solve the general problem.

The following proposition provides a sufficient condition for a polynomial to be an invariant, using the loop's initial values.

\begin{proposition}\label{prop_sufficient}
Consider a loop $\mathcal{L}(\mathbf{a_0},0,F)$. Let $\mathbf{a_n} = F^{(n)}(\mathbf{a_0})$. If $\displaystyle\sum_{i=1}^{m} y_{i}{\bf x}^{\alpha_{i}}$ is a polynomial invariant, then the $y_i$'s~satisfy~the~equations:
  \vspace{-2mm}
\begin{equation*}\label{le}
    \sum_{i=1}^{m} y_{i}{\bf a_0}^{\alpha_{i}} = \cdots =
    \sum_{i=1}^{m} y_{i}{\bf a_k}^{\alpha_{i}}=0.
\end{equation*}
\end{proposition}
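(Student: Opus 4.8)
The plan is to derive the statement directly from the definition of a polynomial invariant, together with the observation that the successive states of the program variables in $\mathcal{L}(\mathbf{a_0},0,F)$ are exactly the iterates of $F$. Since the middle argument is $0$, no guard is imposed and the loop runs unconditionally: the value of $\xb$ before and after the $j$-th iteration is precisely $\mathbf{a_j} = F^{(j)}(\mathbf{a_0})$, with $\mathbf{a_0}$ the initial value. Thus the orbit $\{\mathbf{a_j}\}_{j\geq 0}$ enumerates every state that $\xb$ assumes during execution.

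First I would invoke the definition of a polynomial invariant: a polynomial $p$ is invariant for $\mathcal{L}$ exactly when it vanishes before and after every iteration, i.e.\ at every state of $\xb$. Combined with the identification above, this forces $p(\mathbf{a_j}) = 0$ for all $j \geq 0$. Writing $p = \sum_{i=1}^m y_i \xb^{\alpha_i}$ and evaluating at $\xb = \mathbf{a_j}$ gives $p(\mathbf{a_j}) = \sum_{i=1}^m y_i \mathbf{a_j}^{\alpha_i}$, where $\mathbf{a_j}^{\alpha_i}$ denotes the monomial $\xb^{\alpha_i}$ evaluated at $\mathbf{a_j}$. Hence $\sum_{i=1}^m y_i \mathbf{a_j}^{\alpha_i} = 0$ for every $j$, and in particular for $j = 0,1,\ldots,k$, which is exactly the displayed chain of equations.

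There is essentially no hard step: the result is an unwinding of the definition. The only points requiring care are the identification of the successive program states with the iterates $F^{(j)}(\mathbf{a_0})$ --- which relies on the guard being trivially satisfied so the loop does not halt --- and the fact that each condition $p(\mathbf{a_j})=0$ is \emph{linear} in the unknown coefficients $y_i$, since $\mathbf{a_j}^{\alpha_i}$ is a constant once $\mathbf{a_0}$ and $F$ are fixed. This linearity is what makes the statement useful algorithmically: it yields one necessary linear constraint on the $y_i$ per iterate, so the invariants of bounded degree are contained in the solution set of the resulting linear system. Since the $y_i$ range over a finite-dimensional space of dimension $m = \binom{n+d}{d}$, only finitely many of these constraints are independent, which is what permits truncating the orbit at a finite index $k$ in practice.
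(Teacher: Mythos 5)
Your proposal is correct, and it takes a genuinely more elementary route than the paper's. The paper does not argue directly from the definition: it deduces the proposition from Lemma~\ref{prop4}, a geometric statement about the extended map $F_m=(f_1,\ldots,f_n,y_1,\ldots,y_m)$, which says that the sets $S_k=X_k\cap\V(\xb-\ab_0)$ (intersections of the iterated preimages of $X=\V(\sum_i y_i\xb^{\alpha_i})$ with the fiber $\xb=\ab_0$) are exactly the varieties cut out by the displayed linear equations together with $\xb-\ab_0$, and that $S_{(F_m,X)}\cap\V(\xb-\ab_0)\subset S_k$ for every $k$; combined with the characterization of invariants as points of the invariant set (Proposition~\ref{prop3.4}), this yields the claim. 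Your argument bypasses that machinery entirely: since the guard is trivial, the successive program states are exactly the iterates $\ab_j=F^{(j)}(\ab_0)$, an invariant must vanish at each of them, and evaluating the polynomial at $\ab_j$ gives precisely the displayed equations, each linear in the $y_i$. What your route buys is brevity and self-containedness --- the proposition really is an unwinding of the definition. What the paper's route buys is the explicit variety description of the auxiliary sets $S_k$ and their relation to the invariant set $S_{(F_m,X)}$, which the paper reuses elsewhere (in the correctness argument for Algorithm~\ref{alg2} and again in Section~\ref{sec:implementation}); the proposition is then obtained as a by-product of a lemma that carries more information. Both proofs are valid.
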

Proposition~\ref{prop_sufficient} is a direct consequence of~the~following~lemma.

\begin{lemma}\label{prop4}
Let $\ab_0 \in \C^n$ and $F = (f_1,\dotsc,f_n) \subset \C[x_1,\dotsc,x_n]$.
Let $$X = \V(\displaystyle\sum_{i=1}^{m} y_{i}{\bf x}^{\alpha_{i}})\subset \mathbb{C}^{n+m}.   \vspace{-2mm}
$$ 
Let 
$X_{k} = \displaystyle\bigcap_{j=0}^{k}F_{m}^{-j}(X)$, 
$S_{k}=X_{k}\cap \V(\xb-\ab_0)$, and $\mathbf{a_k} = F^{(k)}(\mathbf{a_0})$ for all $k\in \N$.
Then, the following holds:
    \begin{itemize}
        \item[$(a)$] $S_k=\V(\displaystyle\sum_{i=1}^{m} y_{i}{\bf a_0}^{\alpha_{i}}, \ldots, \displaystyle\sum_{i=1}^{m} y_{i}{\bf a_{k}}^{\alpha_{i}}, \;\xb - \ab_0).
        $\\[0em]
        \item[$(b)$] 
    $S_{(F_{m},X)}\cap \V(\xb-\ab_0)\subset S_{k} \text{ for any } k\in \N$.   
    \end{itemize}
\end{lemma}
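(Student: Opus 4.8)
The plan is to prove $(a)$ by a direct computation of the iterated preimages via Lemma~\ref{lem:equtionspreimage}, and then to deduce $(b)$ immediately from the stabilization result Proposition~\ref{prop:stabilization}.

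For $(a)$, the first step is to exploit the product structure of $F_m = (f_1,\ldots,f_n,y_1,\ldots,y_m)$: since $F_m$ leaves the $\yb$-coordinates untouched, one has $F_m^{(j)}(\xb,\yb) = (F^{(j)}(\xb),\yb)$ for every $j$. Writing $g = \sum_{i=1}^m y_i\xb^{\alpha_i}$ for the defining polynomial of $X$, Lemma~\ref{lem:equtionspreimage} then gives $F_m^{-j}(X) = \V(G_j)$ with $G_j := \sum_{i=1}^m y_i\,(F^{(j)}(\xb))^{\alpha_i}$, which is again linear in the $y_i$'s. Intersecting over $0 \le j \le k$ yields $X_k = \V(G_0,\ldots,G_k)$, so that $S_k = X_k \cap \V(\xb-\ab_0) = \V(G_0,\ldots,G_k,\,\xb-\ab_0)$.

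The second step is to substitute $\xb = \ab_0$. Every point of $\V(\xb-\ab_0)$ has first block $\xb = \ab_0$, so each $G_j$ agrees with its restriction $G_j(\ab_0,\yb)$ on this locus; hence the generators $G_j$ may be replaced by $G_j(\ab_0,\yb)$ without changing the common zero set cut out together with $\xb-\ab_0$. Since $F^{(j)}(\ab_0) = \mathbf{a_j}$ by definition, one computes $G_j(\ab_0,\yb) = \sum_{i=1}^m y_i\,\mathbf{a_j}^{\alpha_i}$, which gives exactly the asserted list of generators and proves $(a)$.

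For $(b)$, the key input is that $S_{(F_m,X)} \subseteq X_k$ for every $k\in\N$; this is precisely the inclusion established by induction in the proof of Proposition~\ref{prop:stabilization}$(d)$. Intersecting both sides with $\V(\xb-\ab_0)$ then gives $S_{(F_m,X)}\cap\V(\xb-\ab_0) \subseteq X_k\cap\V(\xb-\ab_0) = S_k$, as desired. I expect no serious obstacle in either part, as both reduce to bookkeeping; the only point requiring care is the substitution argument in $(a)$, namely verifying that evaluating the preimage generators $G_j$ at $\xb=\ab_0$ commutes correctly with the iterated monomial evaluations, so that $G_j(\ab_0,\yb)$ is genuinely $\sum_i y_i\,\mathbf{a_j}^{\alpha_i}$ and that replacing each $G_j$ by its restriction does not enlarge the variety.
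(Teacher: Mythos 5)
Your proposal is correct and follows essentially the same route as the paper: part $(a)$ uses the product structure of $F_m$ together with Lemma~\ref{lem:equtionspreimage} to write $X_k$ as the zero set of the polynomials $\sum_i y_i (F^{(j)}(\xb))^{\alpha_i}$ and then substitutes $\xb=\ab_0$, exactly as the paper does (your justification of the substitution step is more explicit than the paper's, which is welcome). For part $(b)$ the paper intersects the stabilizing descending chain of Proposition~\ref{prop:stabilization} with $\V(\xb-\ab_0)$, while you invoke the inclusion $S_{(F_m,X)}\subseteq X_k$ directly; this is the same underlying fact, so the difference is only cosmetic.
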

\begin{proof}
 $(a)$  Since $F_m=(f_1,\dotsc,f_n,y_1,\dotsc,y_m)$ then for $j\geq 0$, we can note $F_m^{(j)}=(f_{j,1},\ldots , f_{j, n},y_1,\dotsc,y_m)$. Then, according to Lemma~\ref{lem:equtionspreimage}, we can rewrite $X_k$ as\\[-0.3em]
   $$
   \bigcap_{j=0}^{k} F_{m}^{-j}\left(\V(\sum_{i=1}^{m} y_{i}{\bf x}^{\alpha_{i}})\right)
   \; = \;
   \V\left(\sum_{i=1}^{m} y_{i}{\bf x}^{\alpha_{i}},\ldots,
   \sum_{i=1}^{m} y_{i}(F^{k}({\bf x}))^{\alpha_{i}}\right),\vspace*{0.7em}
   $$
  so that  
  \vspace{-4mm}
 \begin{align*}
     S_k &= \V(\sum_{i=1}^{m} y_{i}{\bf x}^{\alpha_{i}},\ldots, \sum_{i=1}^{m} y_{i}(F^{k}({\bf x}))^{\alpha_{i}},\;\xb - \ab_0)\\
     &= \V(\sum_{i=1}^{m} y_{i}{\bf a_0}^{\alpha_{i}}, \ldots, \sum_{i=1}^{m} y_{i}{\bf a_{k}}^{\alpha_{i}},
     \;\xb - \ab_0).
 \end{align*}

    

 $(b)$ By Proposition~\ref{prop:stabilization}, we have the following descending chain:
   $$X_{0}\supset X_{1}\supset \ldots \supset X_{N}= S_{(F_{m},X)}=X_{N+1}= \ldots \text{ for some } N\in \N.$$
   Thus, by intersecting the above chain with an algebraic variety $V=\V(x_{1}-a_{0,1},\ldots, x_{n}-a_{0,n})$ we get the descending chain:
    $$S_{0}\supset S_{1}\supset \ldots \supset S_{N}= S_{(F_{m},X)}\cap V 
    =S_{N+1}= \ldots \text{ for some } N\in \N.$$
    Thus, $S_{(F_{m},X)}\cap V$ 
    is a subset of $S_{k}$ for any $k\in \N$.
\end{proof}

Therefore, when a loop's initial value is set, Proposition~\ref{prop_sufficient} can provide as many $M$ linear equations as desired, for the coefficients of a degree $d$ polynomial invariant.
Since the codimension of the dth truncated ideal is bounded by $\binom{n+d}{d}$, the dimension of the vector space of polynomials of degree $\leq d$ in $x_1,\dotsc,x_n$. Hence, this latter bound is a natural choice for $M$ as it corresponds to the optimal number of equations in the case where no non-trivial invariant exist.
Indeed, these equations, accelerate the polynomial invariant computations by providing a superset of the desired truncated invariant ideal.
In particular, a vector basis $\mathcal{B}$ of this solution set serves as candidates for the basis of the truncated invariant ideal. 

Following the above idea, we present an algorithm to compute a vector basis of the dth truncated invariant ideal of a loop with a fixed initial value. We first explain two subroutines used in this algorithm:
(i) The procedure \textsf{VectorBasis} takes linear forms as input and computes a vector basis of the common vanishing set of these forms.
(ii) The procedure \textsf{CheckPI} takes as input ${\bf a} \in \Q^n$ and polynomials $F=(f_1,\ldots,f_n)$ and $g$ in $\Q[x_1,\ldots,x_n]$. It outputs \texttt{True} if, and only, if $g$ is a polynomial invariant of $\mathcal{L}(\ab,0,F)$. Such a procedure can be obtained by a direct combination of an application of Algorithm~\ref{algo1} to $(g,F)$ and the effective criterion given by Proposition~\ref{prop:terminateinvariant}.


\begin{algorithm}[H]
\caption{Computing truncated invariant ideals}\label{alg2}
\begin{algorithmic}[1]\setstretch{1.2}
\Require A sequence of rational numbers ${\bf a} = (a_1,\ldots,a_n)$, a natural number $d$ and 
polynomials $F=(f_1,\ldots,f_n) \in \mathbb{Q}[x_1,\ldots,x_n]$.
\Ensure Polynomials forming a vector space basis for the dth truncated ideal of the loop $\mathcal{L}(\ab,0,F)$. 
\State $ g\gets\sum\limits_{|\alpha_i|\leq d}y_i{\bf x}^{\alpha_i};$
\State$M\gets \binom{n+d}{d};$
\State $(b_1,\ldots,b_m)\gets \VectorBasis\big(g({\bf a,y}),g(F({\bf a),y}), \ldots g(F^{M}({\bf a),y})\big)$;
\State \label{step:B}$\mathcal{B}\gets \big(\sum\limits_{|\alpha_i|\leq d}b_{1,i}{\bf x}^{\alpha_i},\ldots ,\sum\limits_{|i|\leq d}b_{m,i}{\bf x}^{\alpha_i}\big)$;
\State$C=(h_1,\ldots,h_l) \gets \{ h\in \mathcal{B}\mid \CheckPI({\bf a},F,h)==False\};$
\If{$C ==\emptyset$,}


\State \Return $\mathcal{B};$
\Else
\State$(\widetilde{h}_1,\ldots, \widetilde{h}_k) \gets \InvariantSet(\sum_{j=1}\limits^{l} z_jh_j, (f_1,\ldots , f_n, z_1,\ldots, z_l ));$
\State$(b_{1}',\ldots, b_{s}')\gets \VectorBasis(\widetilde{h}_1({\bf a,z}),\ldots, \widetilde{h}_k({\bf a,z}))$;
\State$\mathcal{B}_1\gets \big(\sum\limits_{j=1}^{l}b_{1,i}'h_j, \ldots, \sum\limits_{j=1}^{l}b_{s,i}'h_j\big)$;
\State$\mathcal{B}_2\gets \mathcal{B}.remove(C); $
\State$\mathcal{B}\gets\mathcal{B}_1.extend(\mathcal{B}_2);$
\State \Return $\mathcal{B};$
\EndIf
\end{algorithmic}
\end{algorithm}

\begin{remark}\label{remark_disequality_fixed}
As described in Remark~\ref{remark_disequality}, for the general case, one can easily adapt this algorithm to handle loops with disequalities $p(x)\neq 0$ in the guard by replacing $g$ with $p\cdot g$ at step 3.

In terms of complexity, in the worst case, Algorithm~\ref{alg2} does not find any candidates at step 3 and calls Algorithm~\ref{algo1} on the general loop at step 9, without exploiting the given initial values. However, in practice, all candidates found at step 3 are invariants (see Section~\ref{sec:implementation}), and Algorithm~\ref{alg2} terminates at step 7.
\end{remark}



We now prove the correctness of Algorithm~\ref{alg2}.

\begin{theorem}\label{thm:corralgo2}
    On input a sequence of ${\bf a} = (a_1,\ldots,a_n)$ in $\mathbb{Q}^n$, a sequence of polynomials $F=(f_1,\ldots,f_n) \in \mathbb{Q}[x_1,\ldots,x_n]$ and $d\in \N$, Algorithm~\ref{alg2} outputs a sequence of polynomials which is a basis for the dth truncated ideal for the loop $\mathcal{L}({\bf a},0,F)$. 
\end{theorem}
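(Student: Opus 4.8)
The plan is to trace the execution of Algorithm~\ref{alg2} and argue that the returned sequence $\mathcal{B}$ is a vector-space basis of $I_{d,\mathcal{L}}$ by combining the soundness of the candidate-generation step with the completeness guaranteed by the invariant-set computation. First I would fix the notation: let $V_d$ be the vector space of polynomials of degree $\leq d$ in $x_1,\ldots,x_n$, which has dimension $M=\binom{n+d}{d}$, and recall that $I_{d,\mathcal{L}}=I_{\mathcal{L}}\cap V_d$ is a subspace of $V_d$. Writing a generic element of $V_d$ as $g(\xb,\yb)=\sum_{|\alpha_i|\leq d}y_i\xb^{\alpha_i}$, identifying $g(\xb,\bb)$ with its coefficient vector $\bb$, the core observation is that the linear conditions imposed at step~3 and at step~10 cut out exactly the coefficient vectors of genuine invariants, so I must show a double inclusion.

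\emph{Soundness (the computed space is contained in $I_{d,\mathcal{L}}$).} The candidates in $\mathcal{B}$ at step~4 are, by Proposition~\ref{prop_sufficient} (equivalently Lemma~\ref{prop4}$(a)$), a basis of the solution space of the necessary linear equations $g(\mathbf{a_0},\yb)=\cdots=g(\mathbf{a_M},\yb)=0$, hence a superset of $I_{d,\mathcal{L}}$, not necessarily contained in it. The set $C$ extracted at step~5 is precisely the sublist of candidates that fail the exact invariance test \CheckPI, whose correctness follows from Proposition~\ref{prop:terminateinvariant} applied through Algorithm~\ref{algo1}. In the branch $C=\emptyset$ every candidate in $\mathcal{B}$ is certified to be an invariant, so $\mathcal{B}\subseteq I_{d,\mathcal{L}}$ immediately. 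In the branch $C\neq\emptyset$, I would argue that the spurious candidates $h_1,\ldots,h_l$ spanning the problematic part are re-filtered: the subspace $\mathcal{B}_1$ is obtained from the right-kernel computed at step~10 via the invariant set $S_{(F',\,\V(\sum_j z_j h_j))}$ with $F'=(f_1,\ldots,f_n,z_1,\ldots,z_l)$, which by Proposition~\ref{prop3.4} returns exactly those linear combinations $\sum_j z_j h_j$ that are genuine invariants. Thus every element produced, both those in $\mathcal{B}_2$ (the already-certified candidates with $C$ removed) and those in $\mathcal{B}_1$, lies in $I_{d,\mathcal{L}}$.

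\emph{Completeness (every invariant is captured).} Here I would use Lemma~\ref{prop4}$(b)$: every degree-$\leq d$ invariant, viewed through its coefficient vector, satisfies all the linear equations of step~3, so its coefficient vector lies in the span of the step-4 basis $\mathcal{B}$. Writing such an invariant in this basis, the components along the certified candidates $\mathcal{B}_2$ are harmless; the components along the spurious $h_1,\ldots,h_l$ must themselves form a linear combination that is an invariant, hence lie in the kernel recovered at step~10 and are therefore recaptured in $\mathcal{B}_1$. Combining both inclusions, the final $\mathcal{B}=\mathcal{B}_1\cup\mathcal{B}_2$ spans $I_{d,\mathcal{L}}$; linear independence is preserved because $\mathcal{B}_2$ and the re-filtered $\mathcal{B}_1$ span complementary coordinate directions inherited from the independent step-4 basis. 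Termination is inherited from Theorem~\ref{thm:corralgo1}, since every call to \CheckPI and to \InvariantSet terminates. The main obstacle I anticipate is the completeness argument in the $C\neq\emptyset$ branch: one must verify carefully that decomposing an arbitrary invariant along the step-4 basis isolates its ``spurious part'' as a linear combination of the $h_j$ that is \emph{itself} invariant, so that Proposition~\ref{prop3.4} genuinely recovers it, and that no invariant is lost when $\mathcal{B}_2$ is separated from $\mathcal{B}_1$.
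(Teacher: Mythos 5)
Your plan is correct and follows essentially the same route as the paper's proof: containment of $I_{d,\mathcal{L}}$ in the span of the step-4 candidates via Proposition~\ref{prop_sufficient}, the case split on $C=\emptyset$, the use of Proposition~\ref{prop3.4} to characterize invariant combinations $\sum_j z_j h_j$ at step~10, and the decomposition of an arbitrary invariant into its $\mathcal{B}_2$-part and a $C$-part that is itself forced to be invariant. The "main obstacle" you flag at the end is exactly the step the paper handles by noting that $g-\sum_j c_{l+j}g_j$ is a difference of invariants, which is the argument you already sketch, so no gap remains.
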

\begin{proof}
Assume that $ g=\sum\limits_{|\alpha_i|\leq d}y_i{\bf x}^{\alpha_i}$ is a polynomial invariant for $\mathcal{L}$.
Let $\{b_1,\ldots, b_m\}$ be a basis for the solution set of $g({\bf a,y})=g(F({\bf a),y})= \cdots =g(F^{M}({\bf a),y})=0$ where $M=\binom{n+d}{d}$. Then, 
$$\mathcal{B}=\{\sum\limits_{|\alpha_i|\leq d}b_{1,i}{\bf x}^{\alpha_i},\ldots ,\sum\limits_{|i|\leq d}b_{m,i}{\bf x}^{\alpha_i}\}$$ 
consists of linearly independent polynomials in $\C[x_1,\ldots,x_n]_{\leq d}$.
By Proposition~\ref{prop_sufficient}, the variables $\bf y$ satisfy linear equations $g({\bf a,y})=g(F({\bf a),y})= \cdots =g(F^{M}({\bf a),y})=0$. Therefore, $I_{d,\mathcal{L}}$ is contained in the vector space generated by $\mathcal{B}$.  Let $C=\{h_1,\ldots,h_l\}$ be the set of all polynomials in $\mathcal{B}$ that are not polynomial invariants. Assume that $C$ is not empty. Otherwise, every polynomial in $\mathcal{B}$ is a polynomial invariant for $\mathcal{L}$ which implies that $\mathcal{B}$ is a basis for $I_{d,\mathcal{L}}$. By Proposition~\ref{prop3.4}, $\sum\limits_{j=1}^{l} z_jh_j$ is a polynomial invariant if and only if $\widetilde{h}_1({\bf a,z})=\cdots =\widetilde{h}_k({\bf a,z})=0$, where  $$(\widetilde{h}_1,\ldots, \widetilde{h}_k) = \InvariantSet(\sum\limits_{j=1}^{l} z_jh_j, (f_1,\ldots , f_n, z_1,\ldots, z_l )).$$
Since $\widetilde{h}_1({\bf a,z})=\cdots =\widetilde{h}_k({\bf a,z})=0$ represents a system of linear equations, we can find a basis $\mathcal{B}_1$ for a subspace $V_1$ that is the intersection of $I_{d,\mathcal{L}}$ and the vector space generated by $C$, using exactly the same method employed for computing $\mathcal{B}$. 
Denote the set $\mathcal{B}\setminus C=\{g_{1},\ldots,g_{m-l}\}$ by $\mathcal{B}_2$.  

Now, we will show that the vector space $V$ generated by $\mathcal{B}_1 \cup \mathcal{B}_2$ is equal to $I_{d,\mathcal{L}}$. Since $\mathcal{B}_1 \cup \mathcal{B}_2 \subseteq I_{d,\mathcal{L}}$, it follows that $V$ is a subset of $I_{d,\mathcal{L}}$. To prove the other inclusion, let $g\in I_{d,\mathcal{L}}$. Since $I_{d,\mathcal{L}}$ is contained in the vector space generated by $\mathcal{B}$, there exist $c_1,\ldots,c_m\in \C$ such that $g=\sum\limits_{i=1}^{l}c_i\widetilde{h}_i+\sum\limits_{j=1}^{m-l}c_{l+j}g_j$. Then, $g-\sum\limits_{j=1}^{m-l}c_{l+j}g_j$ is a polynomial invariant given that $g,g_{l+1},\ldots,g_m$ are all polynomial invariants. Thus, $\sum\limits_{i=1}^{l}c_ig_i$ is a polynomial invariant contained in the vector space generated by $C$, implying $\sum\limits_{i=1}^{l}c_ig_i\in V'$. Hence, $g\in V$ and so, $I_{d,\mathcal{L}}\subseteq V$. Since the intersection of the vector space generated by $\mathcal{B}_1$ and the vector space generated by $\mathcal{B}_2$ is $\{0\}$, we conclude that $\mathcal{B}_1\cup \mathcal{B}_2$ is a basis for $I_{d,\mathcal{L}}$, which completes the proof. 
\end{proof}  

\begin{example}[Squares]\label{ex3}
Consider the ``Squares'' loop in Appendix~\ref{bench}.
For $d=2,3,4$, the algorithm \textsf{TruncatedInvariant} 
given by Theorem~\ref{algogeneral} cannot compute the $d$-th truncated ideal, 
within an hour.
However, when initial values are fixed, Algorithm~\ref{alg2} easily computes them. To compute $I_{2,\mathcal{L}}$, the input for Algorithm~\ref{alg2} is $(-1,-1,1)$, $2,$ and $F$. Assume that 
$g=y_1+y_2x_1+\cdots+y_{10}x_3^2$ 
is a polynomial invariant. 
By Proposition~\ref{prop_sufficient}, this leads to 10 linear equations, whose solutions 
give the following candidates for polynomial invariants: 
\[
  \begin{array}{lll}
  \mathcal{B}=     &  \{1+x_1+x_2+x_3, 1+x_1+x_2+x_3^2,  2+3x_1+3x_2\\
       &+x_1^2+2x_1x_2+x_2^2,
        -2-x_1-3x_2+x_1^2+2x_1x_3-x_2^2,\\
        &2-3x_1-x_2-x_1^2+x_2^2+2x_2x_3\}.
  \end{array}
  \]
The procedure $\CheckPI$ verifies that all the polynomials in $\mathcal{B}$ are invariant polynomials, and $\mathcal{B}$ forms a basis for $I_{2,\mathcal{L}}$. Moreover, the outputs of Algorithm~\ref{alg2} 
show that 
$I_{3,\mathcal{L}}$ represents a 13-dimensional vector space, and $I_{4,\mathcal{L}}$ a 26-dimensional vector space.

This example has been previously explored in \cite{Unsolvableloops}, where only a closed formula is derived as $x_1(n)+x_2(n)=2^n(x_1(0)+x_2(0)+2)-(-1)^n/2-3/2$. In particular, they do not find any of the above invariants. 
We calculate truncated invariant ideals $I_{d, \mathcal{L}}$ for $d=1,2,3,4$, considering a given initial value. This covers all polynomial invariants up to degree 4 for Example~\ref{ex3} using Algorithm~\ref{alg2}. 
\end{example}

\section{Applications and further results}\label{sec:application}

In this section, we show different applications of Algorithm~\ref{alg2} and their consequences to various examples from the literature. 

\subsection{On the (generalized) Fibonacci sequences}

The following examples, discussed in \cite{Unsolvableloops}, are significant in the theory of trace maps, see e.g.~\cite{baake1993trace, roberts1994trace}. In each example, Algorithm~\ref{alg2} computes truncated polynomial ideals up to degree 4, establishing that in each case, there are no polynomial invariants up to degree~2. 

In Example~\ref{fibex}, we prove that the computed polynomial invariant of degree 4 by Algorithm~\ref{alg2}, generates the entire invariant ideal. Note that the polynomial invariants of the Fibonacci loop, and more generally linear loops, are efficiently addressed by \cite{hrushovski2018polynomial,hrushovski2023strongest}.


\begin{example}[Fibonacci sequence]\label{fibex}
The Fibonacci numbers follow the recurrence relation: $F_{0}=0, F_{1} = 1$, and $F_{n} = F_{n-1}+F_{n-2}$ for all $n\geq 2$. We can express the Fibonacci sequence as a loop $\mathcal{F}$.

\programbox[0.55\linewidth]{
\State$(x_1, x_2)=(0,1)$
\While{true}
\State $\begin{pmatrix}
x_1 \\
x_2
\end{pmatrix}
\longleftarrow
\begin{pmatrix}
x_2\\
x_1+x_2
\end{pmatrix}
$
\EndWhile
}

\noindent Algorithm~\ref{alg2} computes that the truncated invariant ideals $I_{2, \mathcal{F}}$ and $I_{3, \mathcal{F}}$ are zero, and $g={-1+x_1^4+2x_1^3x_2-x_1^2x_2^2-2x_1x_2^3+x_2^4}$ forms a basis for $I_{4, \mathcal{F}}$. Therefore, the Fibonacci numbers satisfy the equation:
\[
F_{n-1}^4+2F_{n-1}^3F_{n}-F_{n-1}^2F_{n}^2-2F_{n-1}F_{n}^3+F_{n}^4-1=0 \text{ for all } n\in \mathbb{N}.
\]
Moreover, $I_{\mathcal{F}}$ is generated by $g$. To prove that, observe that
$$g=(-1-x_1^2-x_1x_2+x_2^2)(1-x_1^2-x_1x_2+x_2^2)$$ 
and $(F_{n-1},F_{n})$ lies on $\V(1-x_1^2-x_1x_2+x_2^2)$ for infinitely many $n$. 
Thus, for any $f\in I_{\mathcal{F}}$, the system of equations $1-x_1^2-x_1x_2+x_2^2=f(x_1,x_2)=0$ has infinitely many solutions. 
By~\cite[Page 4]{shafarevich1994basic}, $f(x_1,x_2)$ is divisible by $1-x_1^2-x_1x_2+x_2^2$. The same applies to $-1-x_1^2-x_1x_2+x_2^2$. Consequently, $f(x_1,x_2)$ is divisible by $g$, establishing that $I_{\mathcal{F}}$ is generated by $g$. Note that $g$ can be easily derived from the square of Cassini's identity, see e.g.~\cite{kauers2008computing}.
\end{example}

For the following examples, Algorithm~\ref{alg2} computes a unique~
invariant in degree 3. Proposition~\ref{prop3.4} and Algorithm~\ref{alg2} demonstrate that the identified polynomial is the sole invariant of degree~3. 
The uniqueness proof is a novel contribution. Additionally, the polynomials given in \cite{amrollahi2023solvable} for Fib2 and Fib3 
were found to be incorrect.

\begin{example}
\label{fib1-3} 
For the Fib1, Fib2 and Fib3 loops from Appendix~\ref{bench} 
Algorithm~\ref{alg2} computes a basis for the truncated invariant ideals as:
\[
I_{1,\mathcal{L}}= I_{2, \mathcal{L}} = \{0\},\quad I_{3,\mathcal{L}}=\{g\},\quad\text{and}\quad I_{4,\mathcal{L}}=\{
g,x_1g,x_2g,x_3g
\},
\]
where $g$ for Fib1, Fib2 and Fib3 is, respectively,
\begin{itemize}
    \item[] $-2+x_1^2+x_2^2+x_3^2-2x_1x_2x_3$, \ 
     $76 -x_2 -2x_1x_3+4x_1^2x_2$, and 
    \item[] $7+x_1+x_2+x_3-x_1^2+x_1x_2+x_1x_3-x_2^2+x_2x_3-x_3^2+x_1x_2x_3$.
\end{itemize} 
Note that a basis for $I_{4,\mathcal{L}}$ is generated by a basis for $I_{3,\mathcal{L}}$.
\end{example}

\subsection{Invariant lifting for generic initial values}\label{sec:lift}
In this section, we present a method for deriving a polynomial invariant for any initial value from a specific one. Given a polynomial invariant $f$ for a loop $\mathcal{L}$ with a given initial value, our method checks if $f-f(a)$ is a polynomial invariant for $\mathcal{L}$ for any initial value $a$. Additionally, $f-h(a)$ is a polynomial invariant for $\mathcal{L}$ for any initial value $a$ if and only if $h=f$.

\begin{proposition}\label{general}
    A polynomial invariant $f({\bf x})=0$ for a loop $\mathcal{L}$ with given initial value and a polynomial map $F$ can be extended to a polynomial invariant $f({\bf x})-f({\bf a})= 0$ for $\mathcal{L}$ with any initial value $\bf a$ if and only if $S_{(F_1, X)}=X$, where $X=\V(f({\bf x})-t)$.
\end{proposition}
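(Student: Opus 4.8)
The plan is to unwind both sides of the equivalence into the single algebraic condition that $f$ is constant along every forward $F$-orbit. The key observations are that $X = \V(f(\mathbf{x}) - t)$ is exactly the graph of the map $f\colon \C^n \to \C$, and that the extended map $F_1 = (f_1, \ldots, f_n, t)$ keeps the last coordinate $t$ fixed, so that $F_1^{(m)}(\mathbf{x}, t) = (F^{(m)}(\mathbf{x}), t)$. Substituting these into the definition of the invariant set gives
\[
S_{(F_1,X)} = \{(\mathbf{x}, f(\mathbf{x})) \mid f(F^{(m)}(\mathbf{x})) = f(\mathbf{x}) \text{ for all } m \geq 0\},
\]
while $X = \{(\mathbf{x}, f(\mathbf{x})) \mid \mathbf{x} \in \C^n\}$. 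Since $S_{(F_1,X)} \subseteq X$ always holds by definition of the invariant set, the equality $S_{(F_1,X)} = X$ is equivalent to the assertion that, for every $\mathbf{x} \in \C^n$ and every $m$, one has $f(F^{(m)}(\mathbf{x})) = f(\mathbf{x})$.

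Next I would reinterpret the left-hand statement in the same terms. For a fixed $\mathbf{a}$, writing $\mathbf{a}^m = F^{(m)}(\mathbf{a})$ for the successive loop values, the polynomial $f(\mathbf{x}) - f(\mathbf{a})$ is an invariant of $\mathcal{L}(\mathbf{a},0,F)$ precisely when it vanishes at every $\mathbf{a}^m$, i.e.\ $f(\mathbf{a}^m) = f(\mathbf{a})$ for all $m$ (the case $m = 0$ forcing the constant to be $f(\mathbf{a})$). Hence the requirement that this hold for \emph{every} initial value $\mathbf{a}$ is literally the condition derived above for $S_{(F_1,X)} = X$, and the proposition follows once both directions are spelled out.

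To present this cleanly I would prove the two implications separately. For $(\Leftarrow)$, given any $\mathbf{a}$, set $t_0 = f(\mathbf{a})$; then $(\mathbf{a}, t_0) \in X = S_{(F_1,X)}$, so $F_1^{(m)}(\mathbf{a}, t_0) = (\mathbf{a}^m, t_0) \in X$ yields $f(\mathbf{a}^m) = t_0 = f(\mathbf{a})$ for all $m$, making $f(\mathbf{x}) - f(\mathbf{a})$ an invariant of $\mathcal{L}(\mathbf{a},0,F)$. For $(\Rightarrow)$, since $S_{(F_1,X)} \subseteq X$ is automatic, I only need $X \subseteq S_{(F_1,X)}$: any $(\mathbf{x}_0, t_0) \in X$ satisfies $t_0 = f(\mathbf{x}_0)$, and applying the hypothesis to the initial value $\mathbf{a} = \mathbf{x}_0$ gives $f(F^{(m)}(\mathbf{x}_0)) = f(\mathbf{x}_0) = t_0$ for all $m$, so $F_1^{(m)}(\mathbf{x}_0, t_0) \in X$ and thus $(\mathbf{x}_0, t_0) \in S_{(F_1,X)}$.

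This argument is essentially a bookkeeping exercise in the definitions, so I do not expect a deep obstacle. The only points requiring care are the recognition that $X$ is the graph of $f$ (so the $t$-coordinate is pinned down and carries no extra degree of freedom), and the correct handling of the universal quantifier over initial values $\mathbf{a}$: it is exactly this quantifier that upgrades the per-orbit identity $f(\mathbf{a}^m)=f(\mathbf{a})$ into the statement $S_{(F_1,X)} = X$ covering all of $\C^n$.
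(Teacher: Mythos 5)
Your proof is correct and follows essentially the same route as the paper's: both reduce the equality $S_{(F_1,X)}=X$ to the statement that $f$ is constant along every forward $F$-orbit, using that $X$ is the graph of $f$, that $F_1$ fixes the coordinate $t$, and that the inclusion $S_{(F_1,X)}\subseteq X$ holds automatically. The only difference is presentational: where the paper invokes its non-termination criterion (Proposition~\ref{prop:terminateinvariant}) and the invariance criterion (Proposition~\ref{prop3.4}) for the extended loop with guard $f(\mathbf{x})-t=0$, you inline those arguments by unwinding the definition of the invariant set directly.
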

\begin{proof}
Assume that $f(\mathbf{x})-f(\mathbf{a})=0$ is 
an invariant for $\mathcal{L}$ for any initial value $\mathbf{a}$. Thus, $\mathcal{L}_{1}$ with a guard $f(x)-t=0$ never terminates if $t =f(\mathbf{a})$. Hence, $X=\V(t-f(\mathbf{x}))\subset S{(F_1, X)}$ and $S_{(F_1, X)} \subset X$ by Proposition~\ref{prop:stabilization}, implying that $X=\V(S_{(F_1,X)})$.
    To prove the other direction, assume $S_{(F_1,X)}= X$. By the definition of the invariant set, $\mathcal{L}_1$ with a guard $f(x)-t=0$ never terminates if and only if initial value of $({\bf x},t)$ is contained in $S_{(F_1,X)}$. Thus, $\mathcal{L}_1$ never terminates if and only if $t= f({\bf a})$ for any initial value $a
    $. By Proposition~\ref{prop3.4}, $f(x)-f({\bf a})=0$ is an invariant for $\mathcal{L}$ with any initial value $\bf a$.
\end{proof}
\vspace{-3mm}

\begin{example}
Consider the loops in Example~\ref{fib1-3}. We compute polynomial invariants of the form $f(x_1,x_2,x_3)-f(a_1,a_2,a_3)=0$. Algorithm~\ref{algo1}, with the input being a polynomial map $F_{1}$ and the algebraic variety $X=\V(x_1^2+x_2^2+x_3^2-2x_1x_2x_3-2-t)\subset \mathbb{C}^4$,  
computes $S_{(F_{1},X)}$, which is equal to $\V(x_1^2+x_2^2+x_3^2-2x_1x_2x_3-2-t)$. 
Then, $x_1^2+x_2^2+x_3^2-2x_1x_2x_3-(a_{1}^2+a_{2}^2+a_{3}^2-2a_{1}a_{2}a_{3})=0$ is a general invariant for Fib1, 
as stated in Proposition~\ref{general}. Similarly, we verify that polynomial invariants of Fib2 and Fib3 
can be extended to a general polynomial invariant of the form $f(x_1,x_2,x_3)-f(a_1,a_2,a_3)=0$.
\end{example}

\subsection{
Termination of semi-algebraic loops}


\begin{definition}\label{def:saloop}
Consider the basic semi-algebraic set $S$ of \, $\R^n$ defined by $g_1=\cdots=g_k=0$ and $h_1>0,\ldots,h_s>0$ and a polynomial map $F= (f_1,\ldots,f_n)$, where the $f_i$'s, the $g_j$'s  and the $h_j$'s are polynomials in $\mathbb{R}[x_1,\ldots,x_n]$.
Then a loop of the form:

\programbox[0.8\linewidth]{
\State$(x_1, x_2,\ldots, x_n)=(a_1,a_2,\ldots,a_n)$
\While{$g_1 = \cdots = g_k =0\textbf{ and } h_1>0,\dotsc,h_s>0$}
\State $\begin{pmatrix}
x_1 \\
x_2 \\
\vdots \\
x_n
\end{pmatrix}
\xleftarrow{\textbf{F}}
\begin{pmatrix}
f_1\\
f_2\\
\vdots\\
f_n
\end{pmatrix}
$
\EndWhile
}

\noindent is called a \emph{semi-algebraic loop} on $S$ w.r.t. $F$. 
We denote by $\mathcal{S}(\gb, \mathbf{h})$ the solution set in $\mathbb{R}^n$ of the polynomial system defined by $\gb$ and $\mathbf{h}$.
\end{definition}

\noindent The following proposition is a direct consequence of the definitions.
\begin{proposition}\label{prop:saloop}
Let $\ab \in \R^n$, and $\gb$ and $\mathbf{h}$ be as above. 
Let $r_1,\dotsc,r_p$ be polynomial invariants of $\mathcal{L}(\ab,0,F)$, then
the semi-algebraic loop $\mathcal{L}(\ab,(\gb,\mathbf{h}),F)$ never terminates if
$
    \V(r_1,\dotsc,r_p) \cap \R^n \subset \mathcal{S}(\gb,\mathbf{h}).
$
\end{proposition}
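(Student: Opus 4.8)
The plan is to show directly that every iterate produced by the loop remains inside the guard set, so that the stopping condition is never triggered. The statement is essentially a direct consequence of the definition of a polynomial invariant, so the proof will be short, but there is one subtlety (reality of the orbit) worth isolating.

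First I would reformulate non-termination exactly as in Proposition~\ref{prop:terminateinvariant}, adapted to the semi-algebraic guard. By the structure of the loop, the guard $g_1 = \cdots = g_k = 0$ and $h_1 > 0, \dots, h_s > 0$ is tested successively at $\ab, F(\ab), F^{(2)}(\ab), \dots$, and the loop continues exactly as long as each such point satisfies it. Hence $\mathcal{L}(\ab,(\gb,\mathbf{h}),F)$ never terminates if and only if $F^{(m)}(\ab) \in \mathcal{S}(\gb,\mathbf{h})$ for every $m \in \N$. This reduces the claim to showing that the whole forward orbit of $\ab$ lands in $\mathcal{S}(\gb,\mathbf{h})$.

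Next I would invoke that $r_1,\dots,r_p$ are polynomial invariants of the trivially-guarded loop $\mathcal{L}(\ab,0,F)$: by definition each $r_j$ vanishes before and after every iteration, that is $r_j(F^{(m)}(\ab)) = 0$ for all $j$ and all $m \in \N$. Thus the orbit $\{F^{(m)}(\ab) \mid m \in \N\}$ is contained in $\V(r_1,\dots,r_p)$. Since $\ab \in \R^n$ and the components of $F$ lie in $\R[x_1,\dots,x_n]$, every iterate is again real, so in fact the orbit is contained in $\V(r_1,\dots,r_p) \cap \R^n$.

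Finally I would apply the hypothesis $\V(r_1,\dots,r_p) \cap \R^n \subset \mathcal{S}(\gb,\mathbf{h})$ to conclude that $F^{(m)}(\ab) \in \mathcal{S}(\gb,\mathbf{h})$ for every $m$, which is precisely the non-termination criterion from the first step. There is no genuine obstacle: the only point that requires care is that the inclusion hypothesis is an inclusion of \emph{real} semi-algebraic sets, so one must intersect the complex variety $\V(r_1,\dots,r_p)$ with $\R^n$ before using it. This is exactly where the assumptions $\ab \in \R^n$ and $F \in \R[x_1,\dots,x_n]^n$ enter, guaranteeing that the orbit stays real and therefore lands in the target set.
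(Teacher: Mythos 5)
Your proposal is correct and takes essentially the same route as the paper, which offers no written proof at all but introduces the proposition with ``the following proposition is a direct consequence of the definitions'': your argument is precisely that unwinding — invariance of $r_1,\dotsc,r_p$ places the (real) orbit of $\ab$ inside $\V(r_1,\dotsc,r_p)\cap\R^n$, the hypothesis then places it inside $\mathcal{S}(\gb,\mathbf{h})$, so the guard never fails. Your isolation of the reality of the orbit (needing $\ab\in\R^n$ and $F$ real) is a sensible point to make explicit, though the paper leaves it implicit.
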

The above inclusion corresponds to the quantified formula:
\[
\forall \xb \in \R^n, r_1(\xb)=\cdots=r_p(\xb)=0 \Rightarrow\left\{\hspace*{-0.2cm}\begin{array}{l}
g_1(\xb) = \cdots = g_k(\xb) =0\\h_1(\xb)>0,\dotsc,h_s(\xb)>0\end{array}\right..
\]
The validity of such a formula can be decided using a quantifier elimination algorithm \cite[Chapter 14]{bpr2006}. Since there are no free variables or alternating quantifiers, it corresponds to the emptiness decision of the set of solutions of a polynomial system of equations and inequalities. This is efficiently tackled by specific algorithms, whose most general version can be found in \cite[Theorem 13.24]{bpr2006}.
Besides, given the particular structure of this formula, an efficient approach would be to follow the one of \cite{synthesis2023algebro}, which is based on a combination of the Real Nullstellensatz \cite{BCR1998} and Putinar's Positivstellensatz \cite{Pu1993}.

We do not go further on these aspects as it diverges from the paper's focus and these directions will be explored in future works. Instead, we show 
why the above sufficient criterion is not necessary. 
\begin{example}\label{exsaloop}
Consider the elementary semi-algebraic loop from Appendix~\ref{bench}. 
A direct study of the linear recursive sequence defined by the successive values $\ab^0,\ab^1,\dotsc$ of $(x_1,x_2)$ shows that this loop never terminates if, and only if $a_1 >0$. Besides, $a_2x_1-a_1x_2=0$ is a polynomial invariant of this loop, and since every $\ab^j$, for $j\geq 0$ must be on this line, it generates the whole invariant ideal.
However, $\V(a_2x_1-a_1x_2)\cap \mathbb{R}^2$ is not contained in $\mathcal{S}(0,x_1)$ as shown in Figure~\ref{fig:saloop}.
\end{example}

\vspace{-3mm}
\section{Implementation and Experiments
}\label{sec:implementation}
In this section, we present an implementation of the algorithms presented in this paper and compare its performances 
with \textsf{Polar}~\cite{DBLP:journals/pacmpl/MoosbruggerSBK22}, which is mainly based on \cite{Unsolvableloops} for the case of unsolvable loops.
\subsection{Implementation details}
A prototype implementation of our algorithm in Macaulay2~\cite{M2} for polynomial loops with fixed initial values is publicly available.\footnote{\url{https://github.com/FatemehMohammadi/Algebraic_PolyLoop_Invariants.git}} The experiments are performed on a laptop equipped with a 4.8 GHz Intel i7 processor, 16 GB of RAM and 25 MB L3 cache. This prototype relies mainly on classic linear algebra routines and Gr\"obner bases computations available in Macaulay2. This makes the implementation quite direct from the pseudo-code presented here.
We made slight modifications to these algorithms to speed up computations, based on observations from experiments, which we explain below. 


We first observed that most of the time, for simple loops, all the candidate polynomials in $\mathcal{B}$, computed at step~\ref{step:B} of Algorithm~\ref{alg2}, are actually polynomial invariants. Another observation, is that the smaller the dimension of the variety $X$ is, the faster Algorithm~\ref{algo1} computes polynomials defining $S_{(F,X)}$, for some polynomial map $F$. Hence, before checking them individually, we first test all elements of $\mathcal{B}$ at once. The collection of these polynomials defines a variety of smaller dimension than each individually, resulting in a potential speedup of up to 100 times (e.g., in Example~\ref{ex3}). 

Another observation is that the value $\binom{n+d}{d}$ for $M$ is a rough overestimate corresponding to the worst case scenario (when no polynomial invariant exists). Users can adjust this parameter according to the specific example. Notably, in most cases, if one linear equation $g(F^k(\ab),\yb)$ is a linear combination of the others, the same applies to all subsequent equations $g(F^{l}(\ab),\yb)$ for $k<l\leq \binom{n+d}{d}$.

\vspace{-2mm}
\subsection{Experimental results}



{
In Tables~\ref{table 1} and \ref{table2}, we compare our implementation of \emph{Algorithm~\ref{alg2}} with the software \textsf{Polar}, 
which is based on \cite{Unsolvableloops}, for the case of unsolvable loops. The benchmarks include those presented in \cite{Unsolvableloops} (with fixed initial values), as well as unsolvable loops in the last two rows, where \textsf{Polar} fails to find any polynomial invariant of degree less than 4.
}

{
Note that, unlike Algorithm~\ref{alg2}, \textsf{Polar} can handle loops with arbitrary initial values. However, on these benchmarks, it only produces invariants of the form $f(\xb)-f(\ab)=0$, where $\ab$ is the initial value, and $f\in\mathbb{Q}[\xb]$. Thanks to Proposition~\ref{prop4}, this can be achieved by our approach for a negligible additional cost, allowing us to determine whether such an invariant exists or not.
}
{
A notable distinction is that our approach is global as we compute \emph{all possible polynomial invariants up to a specified degree}, whereas \textsf{Polar} generates only a (possibly empty) \emph{subset of them}. However, \textsf{Polar} can handle probabilistic loops, whereas ours is limited to deterministic ones.
}

{
In Table~\ref{table 1}, we report quantitative data on the output of Algorithm~\ref{alg2} for various benchmarks (listed in the rows) and for generating polynomial invariants of degree $1,2,3,$ and $4$. Additionally, the column $d$ denotes the number of polynomials outputted by Algorithm~\ref{alg2}, which is also the dimension of the associated truncated invariant ideal. Finally, in the column \textsf{Polar}, we report the number of polynomial invariants computed by \textsf{Polar}.
}
{
In Table~\ref{table2}, we present the timings corresponding to the executions of Algorithm~\ref{alg2} as reported in Table~\ref{table 1}. The timings are in seconds, and we chose a time limit of 360 seconds. In cases where \textsf{Polar} reaches this time limit (degree 4 for Yaghzhev9), we ensured that it does not terminate after 15 minutes or reach maximum recursion depth.
}




We remark that in Table~\ref{algo1}, when there exists non-zero polynomial invariants, we almost always find more than \textsf{Polar}. For example, in the linear case, for Squares, we find the single invariant $1+x_1+x_2+x_3=0$ (see also Example~\ref{ex3}). Additionally, for Yaghzev9, we find $x_1-x_3+x_5=0$, $x_2-x_4+x_6=0$, and $x_8-x_7-7=0$.
Note also that for the case of Ex\,10, \textsf{Polar} fails to find the following ``general'' invariant in terms of the initial values $(a_1,a_2,a_3)$: 
\begin{flushleft}
$(3a_1-a_2-4a_3)^2(x_1+x_2)-(3a_1-a_2-4a_3)^2(x_2+x_3)-9(a_1-a_3)(x_1+x_2)^2-16(a_1-a_3)(x_2+x_3)^2+   24(a_1-a_3)(x_1+x_2)(x_2+x_3)=0$.
\end{flushleft}


However, for the Yagzhev9 and Yagzhev11 examples (with 9 and 11 variables, respectively), \textsf{Polar} handles degree 3, unlike our approach. Yet, both reach the fixed time limit at degree 4. Comparing timings in Table~\ref{table2}, our approach outperforms \textsf{Polar} for small degrees, but \textsf{Polar} demonstrates better performance for larger degrees and variables. However, since our method is complete, this increase in complexity is unavoidable.

In particular, even when the output is empty (i.e. $d=0$ in Table~\ref{table2}), or the same as \textsf{Polar} (the value of $d$ is the same in the column \textsf{Polar}), we can conclude that no  additional (and linearly independent) polynomial invariant can be found.


\begin{table}[H]
    \centering
    \scalebox{0.9}{
\begin{tabular}{|*{9}{c|}}
    \hline
        \multicolumn{1}{|c}{Degree} & \multicolumn{2}{|c}{1} & \multicolumn{2}{|c}{2} &  \multicolumn{2}{|c}{3} & \multicolumn{2}{|c|}{4} \\ \hline
        Benchmark & d & \textsf{Polar}&  d & \textsf{Polar}& d& \textsf{Polar}&  d &\textsf{Polar} \\ \hline
       
         Fib1  & 0 &0 & 0&0  & 1 &1& \textbf{ 4}&1 \\ \hline
        Fib2  & 0& 0&0 & 0  &  1 &1& {TL}&\textbf{1} \\ \hline
        Fib3   & 0 &0 & 0 &0&  1 &1&  \textbf{4}&1 \\ \hline
        Nagata  & \textbf{1}&0 &  \textbf{5}&1  & \textbf{13}&1 &  \textbf{26}&2 \\ \hline
        Yagzhev9  & \textbf{3}&0  & TL &\textbf{3} & {TL}&\textbf{3} &  {TL}&TL \\ \hline
        Yagzhev11  & 0 &0  & 0 &0  & TL&\textbf{1}  & {TL}&TL \\ \hline
        Ex 9 & 0 &0  & 0 &0 & \textbf{3}&1 &  \textbf{11}&1 \\ \hline 
        Ex 10  & 0 &0 & \textbf{2} &0 & \textbf{8} &0& \textbf{19}&0
        \\ \hline
         Squares (Ex~\ref{ex3}) & \textbf{1} & 0 & \textbf{5} &0 &  \textbf{13  }& 0& \textbf{26} &0\\ \hline
        
\end{tabular}
}

{\small TL = Timeout (360 seconds); \quad
\textbf{bold}: new invariants found}
\caption{\label{table 1}Data on outputs of Algorithm~\ref{alg2} and \textsf{Polar}
}
\end{table}

\vspace*{-3em}

\begin{table}[H]
    \centering\hspace*{-0.2cm}
    \scalebox{0.75}{
    \begin{tabular}{|*{9}{c|}}
    \hline
        \multicolumn{1}{|c}{Degree} & \multicolumn{2}{|c}{1} & \multicolumn{2}{|c}{2} &  \multicolumn{2}{|c}{3} & \multicolumn{2}{|c|}{4} \\ \hline
        Benchmark & Ours &\textsf{Polar} & Ours & \textsf{Polar} & Ours & \textsf{Polar} & Ours & \textsf{Polar} \\ \hline      
        Fib1 & \textbf{0.014} & 0.2 & \textbf{0.046} & 0.32 & \textbf{0.17} & 0.68 & 37.07 & \textbf{1.58} \\ \hline
        Fib2 & \textbf{0.017} & 0.23 & \textbf{0.056} & 0.46 & 12.62 & \textbf{1.18} & {TL} & \textbf{3.69} \\ \hline
        Fib3 & \textbf{0.013} & 0.21 & \textbf{0.056} & 0.4 & \textbf{0.225} & 1.18 & 7.11 & \textbf{3.82} \\ \hline
        Nagata& \textbf{0.014} & 0.25 & \textbf{0.057} & 0.55 & \textbf{0.09} & 1.21 & \textbf{0.19} & 2.84 \\ \hline
        Yagzhev9 & 1.46 & \textbf{0.43} & TL & \textbf{5.2}& {TL} & \textbf{131.5} & {TL} & {TL} \\ \hline
        Yagzhev11 & \textbf{0.075} & 0.45 & 61.4 & \textbf{6.83} &TL & \textbf{359} & {TL}& {TL}\\ \hline
        Ex 9& \textbf{0.016} & 0.28 & \textbf{0.06} & 0.64 & \textbf{0.19} & 2.38 & \textbf{0.55} & 11.5 \\ \hline
        Ex 10 & \textbf{0.014} & 0.51 & \textbf{0.058} & 0.7 & \textbf{0.16} & 1.21 & \textbf{0.45} & 2.3 \\ \hline
          \parbox{0.8cm}{\centering Squares\\[-0.4em](Ex~\ref{ex3})}
        & \textbf{0.0151}  & 0.5& \textbf{0.085} & 0.67 & \textbf{0.25}
        & 1.15 & \textbf{1.6} & 2.25 \\ \hline
       
    \end{tabular}
    }
    
    {\small TL = Timeout   (360 seconds); }
    \caption{\label{table2}Timings for Algorithm~\ref{alg2} and \textsf{Polar}, in seconds
    }
\end{table}

 {\centering
\begin{figure}[H]
\includegraphics[width=0.55\linewidth]{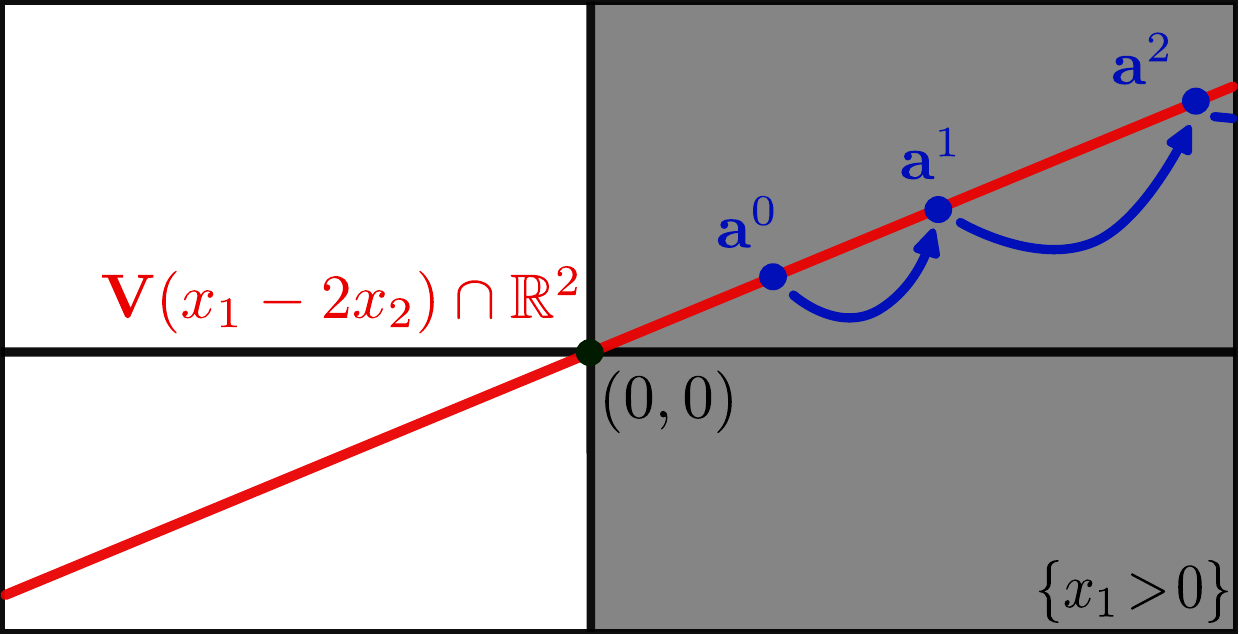}
\caption{\textmd{An illustration of a particular case of Example~\ref{exsaloop} where $(a_1,a_2)=(2,1)$. In blue are depicted the successive values $\ab^0,\ab^1,\dotsc$ of the variables $(x_1,x_2)$, in red is the real zero-set of the invariant ideal, and in gray the set $\mathcal{S}(0,x_1)$ defined by the condition $x_1>0$.}}\label{fig:saloop}
\end{figure}
}

\color{black}
\bibliographystyle{ACM-Reference-Format}
\bibliography{ISAAC_References}

\appendix

\section{Examples and Benchmarks}\label{bench}

\vspace*{1em}
\begin{itemize}[label=]
\item \boxed{Squares}\\[-0.14em]
\programboxappendix[0.55\linewidth]{
\State $(x_1,x_2, x_3)=(-1,-1, 1)$
\While{true}
\State $\begin{pmatrix}
x_1 \\
x_2\\
x_3
\end{pmatrix}
\longleftarrow
\begin{pmatrix}
2x_1+x_2^2+x_3\\
2x_2-x_2^2+2x_3\\
1-x_3
\end{pmatrix}
$

\EndWhile
}

\item \boxed{Fib1}\\[-0.14em]
\programboxappendix[0.55\linewidth]{
\State$(x_1, x_2,x_3)=(2,1,1)$
\While{true}
\State $\begin{pmatrix}
x_1 \\
x_2 \\
x_3
\end{pmatrix}
\longleftarrow
\begin{pmatrix}
x_2\\
x_3\\
2x_2x_3-x_1
\end{pmatrix}
$
\EndWhile
}

\item \boxed{Fib2}\\[-0.14em]
\programboxappendix[0.75\linewidth]{
\State$(x_1, x_2,x_3)=(3,-2,1)$
\While{true}
\State $\begin{pmatrix}
x_1 \\
x_2 \\
x_3
\end{pmatrix}
\longleftarrow
\begin{pmatrix}
x_2\\
2x_1x_3-x_2\\
4x_1x_2x_3-2x_1^2-2x_2^2+1
\end{pmatrix}
$
\EndWhile
}
\item \boxed{{Example  \, 9}}\\[-0.14em]
\programboxappendix[0.7\linewidth]{
\State$(x_1, x_2,x_3)=(3,-1,2)$
\While{true}
\State $\begin{pmatrix}
x_1 \\
x_2 \\
x_3
\end{pmatrix}
\longleftarrow
\begin{pmatrix}
x_1+x_3^2+1\\
x_2-x_3^2\\
x_3+(x_1+x_2)^2
\end{pmatrix}
$
\EndWhile
}
\item \boxed{Fib3}\\[-0.14em]
\programboxappendix[0.7\linewidth]{
\State$(x_1, x_2,x_3)=(2,-1,1)$
\While{true}
\State $\begin{pmatrix}
x_1 \\
x_2 \\
x_3
\end{pmatrix}
\longleftarrow
\begin{pmatrix}
1+x_1+x_2+x_1x_2-x_3\\
x_1\\
x_2
\end{pmatrix}
$
\EndWhile
}

\item \boxed{{Example  \, 10}}\\[-0.14em] 
\programboxappendix[0.8\linewidth]{
\State$(x_1, x_2,x_3)=(-1,2,1)$
\While{true}
\State $\begin{pmatrix}
x_1 \\
x_2 \\
x_3
\end{pmatrix}
\longleftarrow
\begin{pmatrix}
10x_1-8x_3+(x_1+x_2)^2\\
2x_2-(x_1+x_2)^2\\
6x_1-4x_3+(x_1+x_2)^2
\end{pmatrix}
$
\EndWhile
}
\newpage

\vspace{1cm}\item \boxed{Nagata}\\[-0.14em] 
\programboxappendix[0.91\linewidth]{
\State$(x_1, x_2, x_3)=(3,-2,5)$
\While{true}
\State $\begin{pmatrix}
x_1\\
x_2 \\
x_3
\end{pmatrix}
\longleftarrow
\begin{pmatrix}
x_1-2(x_1x_3+x_2^2)x_2-(x_1x_3+x_2^2)^2x_3\\
x_2+(x_1x_3+x_2^2)x_3\\
x_3
\end{pmatrix}
$
\EndWhile
}
\item \boxed{Yagzhev9}\\[-0.14em]
\programboxappendix[0.8\linewidth]{
\State$(x_1, \ldots,x_9)=(2,-3,1,4,-1,7,-4,3,2)$
\While{true}
\State $\begin{pmatrix}
x _1\\
x_2 \\
x_3\\
x_4\\
x_5\\
x_6\\
x_7\\
x_8\\
x_9
\end{pmatrix}
\longleftarrow
\begin{pmatrix}
x_1+x_1x_7x_9+x_2x_9^2\\
x_2-x_1x_7^2-x_2x_7x_9\\
x_3+x_3x_7x_9+x_4x_9^2\\
x_4-x_3x_7^2-x_4x_7x_9\\
x_5+x_5x_7x_9+x_6x_9^2\\
x_6-x_5x_7^2-x_6x_7x_9\\
x_7+(x_1x_4-x_2x_3)x_9\\
x_8+(x_3x_6-x_4x_5)x_9\\
\big(x_9+(x_1x_4-x_2x_3)x_8\\-(x_3x_6+x_4x_5)x_7\big)
\end{pmatrix}
$
\EndWhile
}
\item \boxed{Yagzhev11}\\[-0.14em]
\programboxappendix[0.81\linewidth]{
\State$(x_1,\ldots,x_{11})=(3,-1,2,1,-5,-1,3,4,-1,3,2)$
\While{true}
\State $\begin{pmatrix}
x _1\\
x_2 \\
x_3\\
x_4\\
x_5\\
x_6\\
x_7\\
x_8\\
x_9\\
x_{10}\\
x_{11}
\end{pmatrix}
\longleftarrow
\begin{pmatrix}
x_1-x_3x_{10}^2\\
x_2-x_3x_{11}^2\\
x_3+x_1x_{11}^2-x_2x_{10}^2\\
x_4-x_6x_{10}^2\\
x_5-x_6x_{11}^2\\
x_6+x_4x_{11}^2-x_5x_{10}^2\\
x_7-x_9x_{10}^2\\
x_8-x_9x_{11}^2 \\
x_9+x_7x_{11}^2-x_8x_{10}^2\\
x_{10}-\det(A)\\
x_{11}-x_{10}^3
\end{pmatrix}
$
\EndWhile

where $A = \begin{pmatrix}
x_1 & x_2 & x_3\\
x_4 & x_5 & x_6\\
x_7 & x_8 &x_9
\end{pmatrix}$
}
\item \boxed{Semi\text{-}algebraic}\\[-0.14em]
\programboxappendix[0.55\linewidth]{
\State$(x_1, x_2)=(a_1,a_2)$
\While{$x_1>0$}
\State $\begin{pmatrix}
x_1 \\
x_2
\end{pmatrix}
\longleftarrow
\begin{pmatrix}
2x_1\\
2x_2
\end{pmatrix}
$
\EndWhile
}

\vspace*{-1em}



\end{itemize}
 
\end{document}